
\documentclass[11pt]{llncs}
\usepackage[a4paper]{geometry}
\usepackage{hyperref}
\usepackage{array}
\usepackage{amssymb}
\usepackage{amsfonts}
\usepackage{amsmath}
\usepackage{mathrsfs}
\usepackage{latexsym}
\usepackage{graphicx}
\usepackage{subfigure}
\usepackage{psfrag}
\usepackage{enumerate}
\usepackage{url}
\usepackage{color}
\usepackage{nicefrac}
\usepackage[utf8]{inputenc}
\usepackage{verbatim}
\usepackage{authblk}
\usepackage{xspace}
\usepackage{textcomp}
 \newtheorem{observation}{Observation}

\newcommand{\ie}{\textit{i.e.}}
\renewcommand{\P}{\mathcal{P}}
\newcommand{\G}{\mathcal{G}}

\newcommand{\F}{\mathcal{F}}

\newcommand{\hg}{\ulcorner}
\newcommand{\hd}{\urcorner}
\newcommand{\bg}{\llcorner}
\newcommand{\bd}{\lrcorner}

\newcommand{\N}{\mathbb{N}}

\title{On independent set on B1-EPG graphs\thanks{This work was supported by the grant EGOS ANR-12-JS02-002-01. This is an extended version of the WAOA 2015 corresponding article.}}
\author{M. Bougeret, S. Bessy, D. Gon\c{c}alves, and C. Paul}
\institute{LIRMM, CNRS \& Université de Montpellier}

\begin{document}

\maketitle

\begin{abstract}
In this paper we consider the Maximum Independent Set problem (MIS) on
$B_1$-EPG graphs. EPG (for Edge intersection graphs of Paths on a
Grid) was introduced in ~\cite{edgeintersinglebend} as the class of
graphs whose vertices can be represented as simple paths on a
rectangular grid so that two vertices are adjacent if and only if the
corresponding paths share at least one edge of the underlying
grid. The restricted class $B_k$-EPG denotes EPG-graphs where every
path has at most $k$ bends.  The study of MIS on $B_1$-EPG graphs has
been initiated in~\cite{wadsMIS} where authors prove that MIS is
NP-complete on $B_1$-EPG graphs, and provide a polynomial
$4$-approximation.  In this article we study the approximability and
the fixed parameter tractability of MIS on $B_1$-EPG.  We show that
there is no PTAS for MIS on $B_1$-EPG unless P$=$NP, even if there is
only one shape of path, and even if each path has its vertical part or
its horizontal part of length at most $3$. This is optimal, as we show
that if all paths have their horizontal part bounded by a constant,
then MIS admits a PTAS.  Finally, we show that MIS
is FPT in the standard parameterization on $B_1$-EPG restricted to
only three shapes of path, and $W_1$-hard on $B_2$-EPG. The status for
general $B_1$-EPG (with the four shapes) is left open.
\end{abstract}


\section{Introduction and related work}
In this paper we consider the Maximum Independent Set (MIS) on $B_1$-EPG
graphs. EPG (for Edge intersection graphs of Paths on a Grid)
was introduced in ~\cite{edgeintersinglebend} as the class of graphs whose vertices
can be represented as simple paths on a rectangular grid so that two vertices are adjacent
if and only if the corresponding paths share at least one edge of the
grid.
More precisely, for every EPG-graph $G=(V,E)$, there exists an
EPG-representation $\langle \P,\G \rangle$ where $\P = \{P_v,
v \in V\}$ is a set of paths on the grid $\G$, and two paths $P_u$,
$P_v$ share a grid edge of $\G$ if and only if $\{u,v\} \in E$ (see example
depicted Figure~\ref{fig:intro_EPG}). Notice that two
paths can cross on a grid vertex without creating an edge.
Thus, given an EPG-representation, the objective of the MIS is to find
the largest set of paths that do not share a common grid edge.
For any integer $k \ge 0$, the class $B_k$-EPG denotes EPG-graphs
having an EPG-representation where every path has at most $k$ bends.
Moreover, the class $X$-EPG $\subseteq$ $B_1$-EPG (with $X \subseteq \{\hg,\hd,\bg,\bd\}$)
denotes the subset of $B_1$-EPG graphs where the paths can only have
shapes in $X$ (and thus $B_1$-EPG = $\{\hg\hd\bg\bd\}$-EPG).

\begin{figure}[!htbp]
\begin{center}
\includegraphics[width=0.55\textwidth]{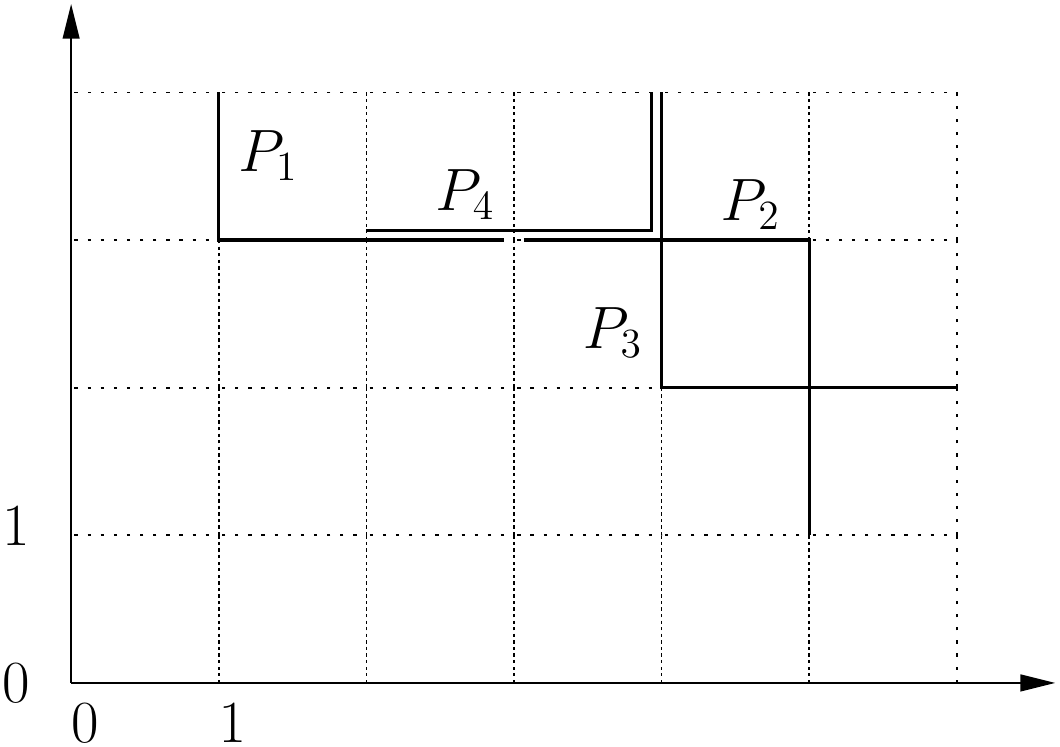}
\end{center}
\caption{B1-EPG representation of $G=(\{1,2,3,4\},\{\{1,4\},\{2,4\},\{3,4\}\})$ where
  $ver(P_1)=[3,4]$, $ver(P_1)$ lies on column $1$, $hor(P_1)=[1,3]$,
  $hor(P_1)$ lies on row $3$, and $cor(P_1)=(1,3)$.
$P_1$ and $P_2$ do not create an edge: even if $hor(P_2)=[3,5]$
  and $hor(P_2)$ also lies on row $3$, $P_1$ and $P_2$ do not share a
  common grid edge. $P_2$ and $P_3$ do not create an edge either.}
\label{fig:intro_EPG}
\end{figure}

For any path $P$ with 0 or 1 bend, we define $hor(P)$ (resp. $ver(P)$) as the interval
corresponding to the projection of $P$ on the horizontal
(resp. vertical) axis. Moreover, to describe the
position of a path we will say that $hor(P)$ (resp. $ver(P)$)
\textbf{lies} on a given row (resp. column) (see Figure~\ref{fig:intro_EPG}).
Finally, we denote by $cor(P)=(x,y)$ the coordinates of the corner of $P$.
If $P$ has no corner, let $cor(P)$ be the coordinates of some end of $P$.

\paragraph{Related work on class inclusions}
It is proved in ~\cite{edgeintersinglebend} that every graph $G$ is an EPG-graph,
and that the size of the underlying grid is polynomial in the size of
$G$.
The maximum number of bends used in the representation has been improved in~\cite{thebendnumber} where authors show that every graph of maximum degree $\Delta$ is in
$B_{\Delta}$-EPG.

Let us now consider graphs with small number of bends.  Notice first
that $B_0$-EPG graphs coincide with interval graphs.  Several recent
papers started the study EPG graphs with small number of bends.  For
example, it has been proved that $B_1$-EPG contains
trees~\cite{edgeintersinglebend}, and that $B_2$-EPG and $B_4$-EPG
respectively contain outerplanar graphs and planar
graphs~\cite{bendnumberplanar}.  We can also mention that the
recognition of $B_1$-EPG graphs is NP-hard, even when only one shape
of path is allowed~\cite{edgeinterlshaped}.

In terms of forbidden induced subgraphs, it is also 
known that $B_1$-EPG graphs exclude induced suns $S_n$ with $n \ge 4$,
$K_{3,3}$, and $K_{3,3}-e$~\cite{edgeintersinglebend}.

There is also a close relation between EPG graphs and multiple
interval graphs.  A $t$-interval is the union of $t$ disjoint
intervals in the real line. A $t$-track interval is the union of $t$
disjoint intervals on $t$ disjoint parallel lines (called tracks), one
interval on each track.  Then, a $t$-interval graph (resp. a $t$-track
graph) is the intersection graph of a set of $t$-intervals
(resp. $t$-track intervals).  Notice that $t$-track interval graphs is
a subclass of $t$-interval graphs.  It is not hard to see that
$t$-interval graphs are $B_{4(t-1)}$-EPG graphs, and that $B_{t}$-EPG
graphs are $(t+1)$-interval graphs~\cite{thebendnumber}.  Note also
that $B_1$-EPG graphs are $2$-track graphs (use one track for the rows
and one track for the columns), which in turn are $B_3$-EPG graphs
(see Figure~\ref{fig:intro2track}).

\begin{figure}[h!]
\begin{center}
\includegraphics[width=0.9\textwidth]{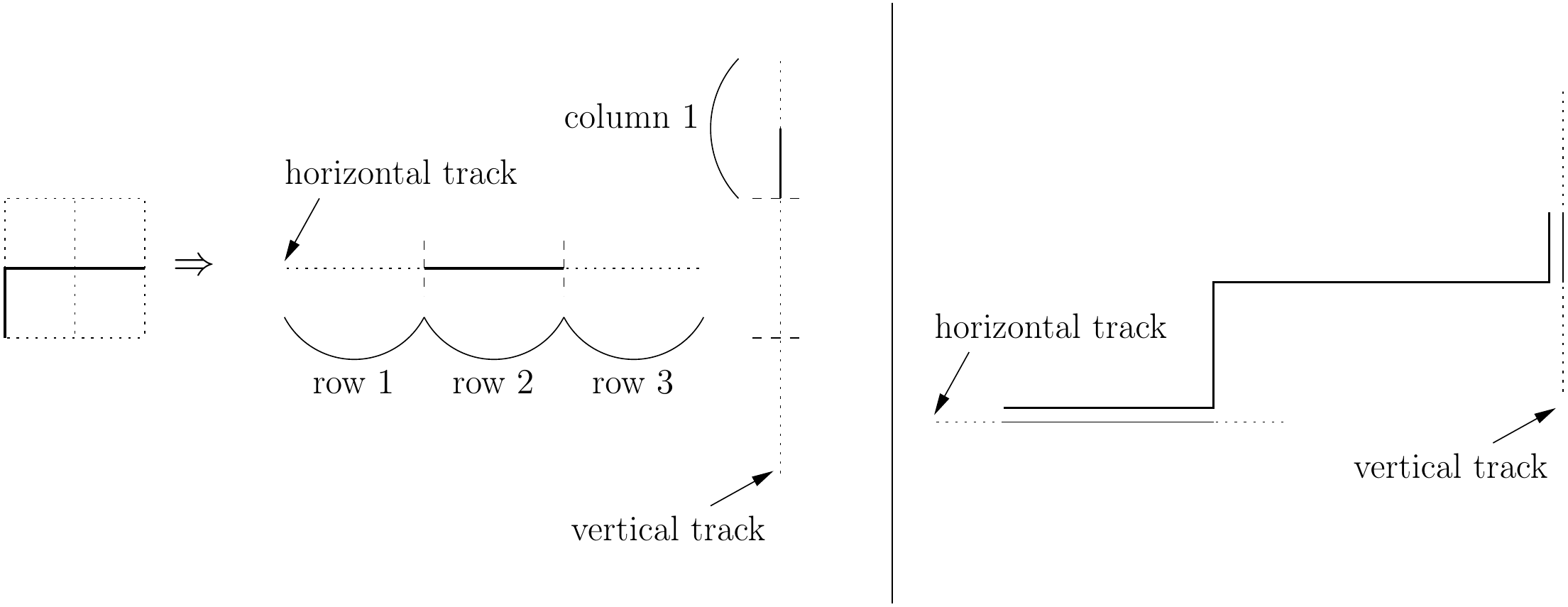}
\end{center}
\caption{
(Left) $B_1$-EPG $\subseteq$
  $2$-track: one track (horizontal here) is used for the horizontal
  parts, and the second track (vertical here) is used for the vertical
  parts. (Right) $2$-track $\subseteq$ $B_3$-EPG}
\label{fig:intro2track}
\end{figure}


\paragraph{Related work on $MIS$: Approximability.}
The main related article is~\cite{wadsMIS} where authors show that MIS
is NP-complete on $B_1$-EPG graphs, and provide a polynomial
$4$-approx-\\
imation.  As $B_1$-EPG graphs are $2$-track graphs, related
work on $MIS$ on $t$-track graph is of interest as well. MWIS (the
generalization of $MIS$ where each vertex have an arbitrary weight)
admits a $2t$-approximation on $t$-interval
graphs~\cite{schedulingsplit} (see also~\cite{FGO13}). Notice that
this answers the open problem of~\cite{wadsMIS} about finding
approximation algorithm for maximum weighted independent set on
$B_1$-EPG graphs.  It is also proved in~\cite{schedulingsplit} that
$MIS$ is APX-hard on $2$-track graphs, even when every vertex is
represented by two intervals of length $2$.  We can also
mention~\cite{approxalgointersection} that aggregates and classifies
several approximation algorithms for MIS on intersection graphs by
introducing several parameters. In particular the authors define the
notion of $k$-simplicial graphs. A graph is $k$-simplicial if and only
if there exists an order $v_1,\dots,v_n$ of the vertices such that,
for each vertex $v_i$, the subset of neighbors of $v_i$ contained in
$\{v_j | j > i\}$ can be partitioned into $k$ sets $S_1,\dots,S_k$
such that $G[S_j \bigcup \{v_i\}]$ is a clique for each $j \in
\{1,\dots,k\}$. Then they recall that MWIS is $k$-approximable in
$k$-simplicial graphs.  The $k$-simplicial graphs are related to
$B_1$-EPG graphs. Indeed the proof of the $4$-approximation of MIS on
$B_1$-EPG graphs~\cite{wadsMIS} amounts to showing that
$\{\bd\bg\}$-EPG graphs are $2$-simplicial graphs, and thus that MIS
has a $2$-approximation on $\{\bd\bg\}$-EPG graphs, and a
$4$-approximation on $B_1$-EPG graphs.  Finally, notice that the known
approximation algorithms for maximum independent set on pseudo disks
intersection graphs~\cite{maxISpseudodisks} do not directly apply
here, as two paths can cross on a grid vertex without creating an
edge.

\paragraph{Related work on $MIS$: Fixed Parameter Tractability.}
In this article we only consider the decision problem $OPT \le k$ ?
parameterized by $k$.  The main related result
is~\cite{jiang-unit2track} where the author proves that $MIS$ is
$W[1]$-complete on unit $2$-track graphs. This immediately implies that
MIS is $W[1]$-hard on $B_3$-EPG graphs.  About positive results ($FPT$
algorithm), to the best of our knowledge it seems that there is no
known $FPT$ algorithm on $B_1$-EPG or a superclass of it.

\paragraph{Contributions.}
In this article we study the approximability (in Section~\ref{sec:approx}) and the fixed parameter
tractability (in Section~\ref{sec:param}) of MIS on $B_1$-EPG.
Our main results are the following. We first show that there is no PTAS for MIS on $\{\hg\}$-EPG unless P$=$NP, even if each path has its vertical 
part or its horizontal part of length at most $3$ (\ie~ $\forall
P, (|hor(P)| \le 3 \mbox{ or } |ver(P)| \le 3))$. This improves the
NP-hardness of~\cite{wadsMIS}. Then, we show that this result
cannot be improved by showing that if $\forall P, |hor(P)| \le c$, or
if $\forall P, |ver(P)| \le c$ (where $c$ is a constant), then MIS
admits a PTAS on $B_1$-EPG graphs.
In Section~\ref{sec:param}, we show that MIS is FPT on $B_1$-EPG restricted to
only three shapes of paths, and $W_1$-hard on $B_2$-EPG. The status for
general $B_1$-EPG (with the four shapes) is left open.


\section{Approximability}\label{sec:approx}
The objective of this section is to prove that there is no PTAS for MIS on $\{\hg\}$-EPG graphs.
Let us first recall that the NP-hardness proof of MIS on $\{\hg\hd\bg\bd\}$-EPG graphs
of \cite{wadsMIS} is a reduction from MIS on planar graphs. 
As there is a PTAS for MIS on planar graphs, this is not a good candidate
to reduce from when looking for inapproximability. 
Moreover, the proof of \cite{schedulingsplit} showing
the APX-hardness of MIS on unit $2$-track graph cannot be adapted
too. Indeed, this proof consists in proving that the class of unit $2$-track graphs
(more precisely $2$-track graphs where every vertex is represented by two intervals of
length $2$) contains all graphs of degree at most $3$ (for which  MIS is APX-hard), but this class contains $K_{3,3}$ and is hence not included in $B_1$-EPG~\cite{thebendnumber}.

Our reduction is based on the classical approximation preserving
reduction from MAX-3-SAT to MIS (see for example~\cite{vazirani}, Theorem 29.13).

Let us define MAX-3-SAT(3):
\begin{itemize}
\item Input:
\begin{itemize}
\item A set of $n_{var}$ variables $\{x_i, 1 \le i \le n_{var}\}$.
\item A set of $m_{cl}$ clauses $\{C_j, 1 \le j \le m_{cl}\}$, where
  each clause is of the form $l^j_1 \vee l^j_2 \vee l^j_3$, where for
  any $1 \le t \le 3$, $\exists i$ such that $l^j_t = x_i$ or $l^j_t =
  not(x_i)$ (in both cases we say that $C_j$ contains variable $x_i$).
\item Furthermore, each variable appears at most $3$ times ($\forall
  i$, $|\{C_j| C_j \mbox{ contains } x_i\}| \le 3$). Moreover, we can
  assume that for any $i$, the positive form $x_i$ appears exactly $2$
  times, and the negative form appears exactly $1$ time.
\end{itemize}
\item Output: a truth assignment of the variables that maximizes the number of satisfied clauses 
\end{itemize}

Let us define the same function $f$ as in \cite{vazirani} (Theorem 29.13) that maps any instance $I_{sat}$ of
MAX-3-SAT(3) to a graph $f(I_{sat})=G$, where $G=(V,E)$.
For each clause $l^j_1 \vee l^j_2 \vee l^j_3$ we create a triangle $\{v^j_1,v^j_2,v^j_3\}$, and thus we have $|V|=3m_{cl}$.
For any $j$ and $t$, we say that $v^j_t$ corresponds to variable $x_i$
(resp. to the negation of variable $x_i$) if and only if $l^j_t = x_i$ (resp. $l^j_t = not(x_i)$).
For any $i, 1 \le i \le n_{var}$, we add an edge
$\{v^j_t,v^{j'}_{t'}\}$ if and only if $\exists i$ such that $l^j_t$ corresponds
to $x_i$ and $l^{j'}_{t'}$ corresponds to the negation of $x_i$.

Let $\F = \{f(I_{sat}), I_{sat} \mbox{ instance of MAX-3-SAT(3)}\}$ be the set of graphs obtained from
instances of MAX-3-SAT(3). 

\begin{proposition}[folklore]\label{prop:strictreduc}
There is a strict reduction from MAX-3-SAT(3) to MIS on graphs $\F $.
\end{proposition}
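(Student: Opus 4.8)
The plan is to recall the standard correspondence between satisfying assignments of a MAX-3-SAT instance and independent sets in the conflict graph $G=f(I_{sat})$, and to check that it is in fact a \emph{strict} reduction in the sense of~\cite{vazirani}, i.e. that optimal solutions are mapped to optimal solutions (equivalently, that the two objective functions coincide up to the obvious additive/identity relation needed for APX-hardness transfer). Since the map $f$ is exactly the one of Vazirani's Theorem~29.13, the bulk of the argument is classical; what one has to be a little careful about is the extra structural assumptions we placed on MAX-3-SAT(3) (each variable occurs at most $3$ times, the positive literal exactly twice and the negative one exactly once) and the fact that we do not modify $G$ at all.

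First I would fix notation: a truth assignment $\sigma$ of the $n_{var}$ variables; a clause $C_j = l^j_1 \vee l^j_2 \vee l^j_3$ is \emph{satisfied} by $\sigma$ if at least one literal $l^j_t$ evaluates to true. Given $\sigma$, build an independent set $S_\sigma$ of $G$ by picking, in each satisfied triangle $\{v^j_1,v^j_2,v^j_3\}$, exactly one vertex $v^j_t$ whose literal $l^j_t$ is true under $\sigma$ (an arbitrary choice among the true ones), and picking nothing in the triangles that are not satisfied. The key observation is that $S_\sigma$ is independent: two vertices inside the same triangle are never both chosen; and two vertices $v^j_t, v^{j'}_{t'}$ in different triangles are adjacent only if $l^j_t$ and $l^{j'}_{t'}$ are complementary literals of the same variable $x_i$, in which case they cannot both be true under $\sigma$. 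Hence $|S_\sigma|$ equals the number of clauses satisfied by $\sigma$, so $\mathrm{OPT}_{MIS}(G) \ge \mathrm{OPT}_{3SAT}(I_{sat})$.

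Conversely, given an independent set $S$ of $G$, I would define an assignment $\sigma_S$ as follows. $S$ meets each triangle in at most one vertex; let $L(S)$ be the set of literals corresponding to the vertices of $S$. Since $S$ is independent, $L(S)$ contains no complementary pair, so it is consistent and can be extended (setting the remaining variables arbitrarily, say to true) to a full assignment $\sigma_S$ making every literal in $L(S)$ true; consequently every triangle met by $S$ is satisfied by $\sigma_S$, and the number of clauses satisfied by $\sigma_S$ is at least $|S|$. Combining the two directions gives $\mathrm{OPT}_{MIS}(G) = \mathrm{OPT}_{3SAT}(I_{sat})$, and moreover the two maps $\sigma \mapsto S_\sigma$ and $S \mapsto \sigma_S$ turn an optimal (resp. near-optimal) solution on one side into an optimal (resp. equally good or better) solution on the other in polynomial time; this is exactly the definition of a strict reduction. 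The only mild subtlety to spell out is that the reduction is \emph{value-preserving} (no scaling, no additive shift), which is why the occurrence bounds in MAX-3-SAT(3) — present only so that the later geometric construction embeds $G$ into an $\{\hg\}$-EPG graph — play no role here and can simply be carried along. There is no real obstacle: the proof is a routine verification that independence in $G$ is the same as logical consistency of a partial assignment.
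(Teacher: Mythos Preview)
Your proposal is correct and follows exactly the same approach as the paper: the paper's proof simply states that a truth assignment satisfying $t$ clauses is equivalent to an independent set of size $t$, and that from any independent set of size $t$ one can recover in polynomial time an assignment satisfying at least $t$ clauses. Your write-up is a more detailed (and entirely accurate) unfolding of this folklore argument, with the same two maps $\sigma\mapsto S_\sigma$ and $S\mapsto\sigma_S$ and the same value-preservation observation.
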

\begin{proof} 
Notice that the existence of a truth assignment satisfying $t$
clauses is equivalent to the existence of an independent set of size
$t$. As from any independent set of size $t$ we can deduce in
polynomial time a truth assignment satisfying at least $t$ clauses,
we get the desired result.
\qed\end{proof}

\begin{proposition}
\label{FinEPG}
$\F  \subseteq \{\hg\hd\}$-EPG.
\end{proposition}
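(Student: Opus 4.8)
The plan is to exhibit, for every $G = f(I_{sat})$, an explicit $\{\hg\hd\}$-EPG representation. Recall that $G$ consists of one triangle $\{v^j_1, v^j_2, v^j_3\}$ per clause $C_j$, plus "conflict" edges connecting a literal $x_i$ to the (unique) literal $\overline{x_i}$, where by the normalization of MAX-3-SAT(3) each variable $x_i$ appears positively exactly twice (say in clauses $C_{j}$ and $C_{j'}$) and negatively exactly once (in clause $C_{j''}$). So the whole conflict structure, for a fixed $i$, is a small graph on three vertices: a path with center the negative occurrence and two leaves the positive occurrences (three edges total: the two positive-negative edges, plus the triangle edge linking the two positive occurrences if they happen to be in clauses that... no — the two positive occurrences are in different clauses, so there is no edge between them from the triangle gadget). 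Thus around variable $x_i$ we only need: two shared grid edges realizing $\{x_i\text{-occ}_1, \overline{x_i}\}$ and $\{x_i\text{-occ}_2, \overline{x_i}\}$, and separately the three triangle edges of each clause.

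**Realizing triangles and conflicts with two shapes.** First I would fix a dedicated region of the grid for each clause $C_j$ and each variable $x_i$, placed along a diagonal so that distinct gadgets use disjoint grid edges. For a single clause triangle, three $\hg$-paths (or a mix of $\hg$ and $\hd$) can be arranged so that each pair shares exactly one grid edge: for instance put all three corners near a common point, with path $t$ contributing its horizontal arm to the shared edge with path $t{+}1$ and its vertical arm to the shared edge with path $t{-}1$. The key freedom is that each path $v^j_t$ has two arms (horizontal and vertical), and it needs at most three distinct "shared-edge obligations": one with each of its two triangle-partners, and at most one conflict edge (since $x_i$ appears at most... a positive occurrence has exactly one conflict edge, the negative occurrence has exactly two). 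So a path must sometimes accommodate two triangle edges and up to two conflict edges — four obligations — but each path has only two arms; the resolution is that a grid edge along one arm can be shared by making two segments overlap on that arm, i.e. one arm can carry several shared edges as long as they sit on disjoint sub-segments of that arm and the relevant partner paths each overlap only their intended sub-segment. I would make the arms long enough to lay out these sub-segments in sequence.

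**Checking the "only if" direction.** The main obstacle, and the step deserving the most care, is ensuring no \emph{unintended} shared grid edges: paths from different gadgets must be kept on disjoint rows/columns, and within a gadget two paths that are non-adjacent in $G$ (e.g. the two positive occurrences of $x_i$, which lie in different clauses and share no edge) must be routed so their arms meet at most at a grid \emph{vertex}, never along a grid edge — which is permitted by the EPG model. Concretely I would reserve, for the conflict gadget of $x_i$, one short vertical column segment owned by the path of the negative literal $\overline{x_i}$, split into two consecutive unit sub-segments; the path of the first positive occurrence bends in to overlap the upper sub-segment, the second positive occurrence overlaps the lower one, and the two positive paths approach from opposite sides so their horizontal arms lie on different rows. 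A careful bookkeeping of which row and column each of the $3m_{cl}$ paths uses, with a separate "lane" per path, finishes the argument; since the grid can be taken polynomial in $|G|$ this is a valid EPG representation, and inspecting it shows the only shared grid edges are exactly the triangle edges and the conflict edges, i.e. exactly $E(G)$.
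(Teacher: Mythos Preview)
Your proposal has a genuine structural gap. You treat the clause-triangle gadgets and the variable-conflict gadgets as separate local constructions ``placed along a diagonal'', but every path $P_{v^j_t}$ must simultaneously participate in \emph{both} a clause triangle (with its two clause-mates, which correspond to \emph{other} variables) and a conflict structure (with the other occurrences of \emph{its own} variable, which live in \emph{other} clauses). Since a $B_1$ path is a single L, it cannot visit two spatially separated regions; its one vertical arm and one horizontal arm must each be committed to a single row or column that is globally consistent with all its partners.

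Your concrete conflict description makes this inconsistency visible. You put the vertical arm of each positive occurrence on the column of the negative literal of $x_i$, and their horizontal arms on two different rows. Now consider the clause triangle containing one of these positive occurrences: its two clause-mates correspond to variables $x_{i'}, x_{i''}$ with $i'\neq i$, whose own conflict gadgets force their vertical arms onto different columns and their horizontal arms onto rows chosen by those other gadgets. There is then no row or column on which all three clause-mates can meet, so the triangle cannot be realised with one-bend paths under your layout. The paper avoids this by a \emph{global} grid coordination: column $3j$ is reserved for clause $C_j$ (all three clause-mates have vertical part $[0,i]$ on column $3j$, automatically pairwise overlapping near row $0$), and row $i$ is reserved for variable $x_i$ (all three occurrences of $x_i$ have their horizontal part on row $i$, and only the intervals are adjusted so that $\overline{x_i}$ overlaps both positives while the two positives miss each other). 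Each path's vertical arm handles the triangle and its horizontal arm handles the conflict; the $\hd$ shape is needed exactly when the negative occurrence is in the rightmost of the three clauses. Your sketch would need to be rebuilt around some such global assignment of rows and columns before it can work.
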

\begin{proof}
Let us draw a graph $G \in \F $ using only paths of the form
$\{\hg,\hd\}$. See Figure~\ref{fig:noptasV1} for an example.
Informally, each clause $j$ corresponds to column $3j$, and each
variable $x_i$ corresponds to line $i$. Let us now define more
precisely the shapes of the paths.

\begin{figure}[h!]
\begin{center}
\includegraphics[width=0.4\textwidth]{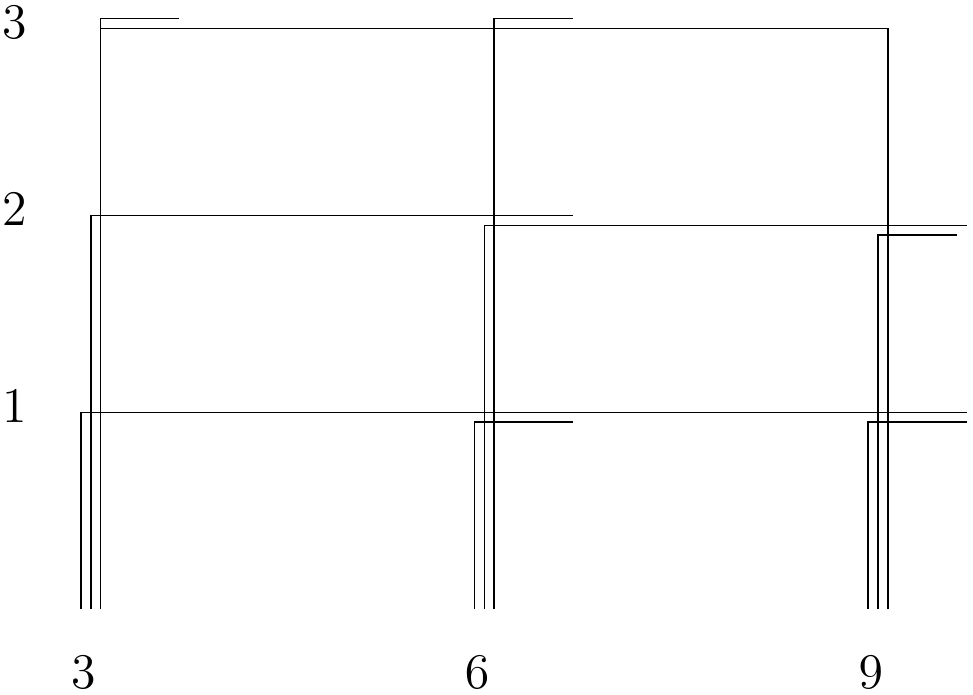}
\end{center}
\caption{Example for $C_1 = not(x_1)\vee x_2 \vee x_3$, $C_2 = x_1\vee
  not(x_2) \vee x_3$, and $C_3=x_1\vee x_2 \vee not(x_3)\}$}
\label{fig:noptasV1}
\end{figure}

Let $(V,E)$ be the vertex and edge set of $G$, with $|V|=n$.
For each vertex $v^j_t \in V$ corresponding to a variable $x_i$
or its negation, we define a path $P_{v^j_t}$, with $ver(P_{v^j_t})=[0,i]$ lying on column $3j$. 
Notice that we already have all the vertices of $V$, and the edges corresponding to the $\frac{n}{3}$ triangles.
It remains now to define the horizontal parts of the paths to encode
the adjacencies between a variable and its negation.
Let $i$ in $\{1,\dots,n_{var}\}$, and let
$\{v^{j_1}_{t_1},v^{j_2}_{t_2},v^{j_3}_{t_3}\}$ be the vertices of $G$
corresponding to variable $x_i$ or to its negation, with $j_1 < j_2 <
j_3$.
There are now three cases, according to the position of the clause containing the
negative form of $x_i$ (recall that without loss of generality we
suppose that $not(x_i)$ appears exactly one time).

Recall that the horizontal part of
$\{v^{j_1}_{t_1},v^{j_2}_{t_2},v^{j_3}_{t_3}\}$ will lie one line $i$,
and thus we just have to define the corresponding intervals.
If $C_{j_1}$ contains $not(x_i)$, then
$hor(v^{j_1}_{t_1})=[3j_1,3j_3+1]$, $hor(v^{j_2}_{t_2})=[3j_2,3j_2+1]$, $hor(v^{j_1}_{t_1})=[3j_3,3j_3+1]$.
If $C_{j_2}$ contains $not(x_i)$, then
$hor(v^{j_1}_{t_1})=[3j_1,3j_2+1]$, $hor(v^{j_2}_{t_2})=[3j_2,3j_3+1]$, $hor(v^{j_1}_{t_1})=[3j_3,3j_3+1]$.
If $C_{j_3}$ contains $not(x_i)$, then
$hor(v^{j_1}_{t_1})=[3j_1,3j_1+1]$, $hor(v^{j_2}_{t_2})=[3j_2,3j_2+1]$, $hor(v^{j_1}_{t_1})=[3j_1,3j_3]$.
This concludes the description of $G$ as an $\{\hg\hd\}$-EPG graph.

\qed\end{proof}

 Notice that this construction is not
possible from instances of MAX-3-SAT(4), as we could have for example
$C_1$ containing $x_1$, $C_2$ containing $not(x_1)$, $C_3$ containing
$x_1$, and $C_4$ containing $not(x_1)$.

As MAX-3-SAT(3) remains
MAXSNP-complete~\cite{papadimitriou2003computational} (Theorem 13.10), we get the following corollary.
\begin{corollary}
Any $(1+\epsilon)$-approximation for MIS on $\{\hg\hd\}$-EPG implies a
$(1+\epsilon)$-approximation for MAX-3-SAT(3), and thus there is no
PTAS for MIS on $\{\hg\hd\}$-EPG graphs unless P$=$NP.
\end{corollary}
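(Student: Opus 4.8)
The plan is to simply chain together the three ingredients already in hand: the strict reduction $f$ of Proposition~\ref{prop:strictreduc}, the class inclusion of Proposition~\ref{FinEPG}, and the MAXSNP-completeness of MAX-3-SAT(3). First I would check that the whole pipeline is polynomial. Given an instance $I_{sat}$ of MAX-3-SAT(3), the graph $G=f(I_{sat})$ has $3m_{cl}$ vertices and is computed in polynomial time; the EPG-representation $\langle \P,\G\rangle$ of $G$ using only paths of shape in $\{\hg,\hd\}$, built in the proof of Proposition~\ref{FinEPG}, lives on a grid of polynomial size (about $3m_{cl}$ columns and $n_{var}$ rows) and is likewise produced in polynomial time directly from $I_{sat}$. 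Hence any algorithm solving MIS on $\{\hg\hd\}$-EPG graphs can be fed this representation.

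Next I would track the objective values. By Proposition~\ref{prop:strictreduc} the reduction is strict, i.e.\ $OPT_{MIS}(G)=OPT_{SAT}(I_{sat})$, and moreover from any independent set $S$ of $G$ one recovers in polynomial time a truth assignment of $I_{sat}$ satisfying at least $|S|$ clauses. So, running a hypothetical $(1+\epsilon)$-approximation for MIS on $\langle\P,\G\rangle$ returns an independent set $S$ with $OPT_{MIS}(G)/|S|\le 1+\epsilon$; translating $S$ back yields an assignment satisfying some number $t\ge |S|$ of clauses, whence
\[
\frac{OPT_{SAT}(I_{sat})}{t}\le\frac{OPT_{MIS}(G)}{|S|}\le 1+\epsilon,
\]
which is exactly a $(1+\epsilon)$-approximation for MAX-3-SAT(3). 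In particular a PTAS for MIS on $\{\hg\hd\}$-EPG graphs would give a PTAS for MAX-3-SAT(3).

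Finally I would invoke that MAX-3-SAT(3) is MAXSNP-complete~\cite{papadimitriou2003computational} (equivalently APX-hard under the PCP theorem), so it admits no PTAS unless P$=$NP; combined with the previous paragraph this proves the statement. I do not expect a real obstacle here: the only two points worth a word of care are (i) that the grid in Proposition~\ref{FinEPG} has polynomial size, so the composed reduction remains polynomial, and (ii) that a \emph{strict} reduction transfers the approximation ratio verbatim, with no additive or multiplicative loss — which is precisely what Proposition~\ref{prop:strictreduc} provides.
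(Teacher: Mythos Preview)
Your proposal is correct and follows exactly the paper's approach: the corollary is stated as an immediate consequence of the strict reduction (Proposition~\ref{prop:strictreduc}), the inclusion $\F\subseteq\{\hg\hd\}$-EPG (Proposition~\ref{FinEPG}), and the MAXSNP-completeness of MAX-3-SAT(3). You simply spell out the chaining in more detail than the paper does, but the argument is identical.
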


Our objective is now to prove that there is no PTAS for MIS, even when
only one type of shape is allowed. Notice that the only case we need
the $\hd$ shape in the previous reduction is when $C_{j_3}$ contains
$not(x_i)$.

Let $G \in \F $, $G=(V,E)$, with $|V|=n$ and $|E|=m$. 
Let us partition $E = E_1 \bigcup E_2$,
where $E_1$ contains the $n$ edges corresponding to the $\frac{n}{3}$
triangles, and $E_2$ contains the remaining edges corresponding to edges between a variable and its negation.
To avoid using $\hd$, we will subdivide each edge of $E_2$ into $5$ edges, introducing thus $4$ new
vertices for each such edge.
More formally, let us define $G' =f'(G )$, $G'=(V',E')$. 
We start by setting $V'=V$ and $E'=E$. Moreover, for any
$e=\{v^j_t,v^{j'}_{t'}\} \in E_2$ (with $j \neq j'$), we add four new
vertices $w^e_{1}, w^e_{2}, w^e_{3}, w^e_{4}$ to $V'$, and we add
edges $\{v^j_t,w^e_1\}, \{\{w^e_i,w^e_{i+1}\}, 1 \le i \le 3\},
\{w^e_4,v^{j'}_{t'}\}$ to $E'$.
Finally, let $\F'  = \{f'(G ), G  \in \F \}$.

\begin{observation}
\label{obs1}
Let $G =(V ,E ) \in \F $, $G'  = f'(G )$.
Let $m  = |E |$.
\begin{itemize}
\item For any solution $S $ of $G $, there is a solution $S' $ of
  $G' $ such that $|S' | \ge |S |+2m $, and thus $Opt(G' ) \ge
  Opt(G )+2m $
\item For any solution $S' $ of $G' $, we can find in polynomial time
  a solution $S$ of $G$ such that $|S| \ge |S' |-2m $
\end{itemize}
\end{observation}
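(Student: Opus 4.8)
The plan is to reduce both items to a local analysis of a single subdivision gadget and then sum over the edges of $E_2$ that were subdivided. Fix $e=\{u,v\}\in E_2$, say $u=v^j_t$ and $v=v^{j'}_{t'}$ with $j\neq j'$; in $G'$ the vertices $u$ and $v$ are joined by the path $u-w^e_1-w^e_2-w^e_3-w^e_4-v$. The argument rests on two elementary observations about this gadget. (a) The four new vertices $w^e_1,w^e_2,w^e_3,w^e_4$ induce a $P_4$, hence any independent set of $G'$ contains at most two of them. (b) Since $u$ and $v$ are adjacent in $G$, at most one of them belongs to a given independent set $S$ of $G$, and then, the gadget path having an odd number of edges, one can always pick two pairwise non-adjacent new vertices avoiding the selected endpoint: take $\{w^e_1,w^e_3\}$ if neither $u$ nor $v$ is in $S$, take $\{w^e_2,w^e_4\}$ if $u\in S$, and take $\{w^e_1,w^e_3\}$ if $v\in S$.

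For the first item, given a solution $S$ of $G$, add to $S$, for every $e\in E_2$, the two new vertices supplied by (b). Distinct gadgets share no new vertex, and each added pair is compatible with $S$, so the resulting set $S'$ is an independent set of $G'$ with $|S'|=|S|+2|E_2|$; taking $S$ optimal gives $Opt(G')\ge Opt(G)+2|E_2|$.

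For the second item, given a solution $S'$ of $G'$, put $S_0=S'\cap V$. By (a), $|S'|-|S_0|=\sum_{e\in E_2}|S'\cap\{w^e_1,w^e_2,w^e_3,w^e_4\}|\le 2|E_2|$. The set $S_0$ may fail to be independent in $G$ only on edges of $E_2$ (the triangle edges of $E_1$ still belong to $G'$), and on a violated edge $e=\{u,v\}$ both endpoints lie in $S'$, which forces $w^e_1,w^e_4\notin S'$, so only one new vertex of that gadget can be in $S'$; hence if exactly $k$ edges of $E_2$ are violated then in fact $|S_0|\ge |S'|-2|E_2|+k$. Deleting from $S_0$ one endpoint of each violated edge (a vertex cover of the ``violation graph'', of size at most $k$, computable in linear time) yields an independent set $S$ of $G$ with $|S|\ge |S_0|-k\ge |S'|-2|E_2|$. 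Both $f'$ and this back-translation are plainly polynomial.

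The only step that requires care is this final count: one must notice that a violated $E_2$-edge automatically ``pays for its own repair'' because it already blocks two of the four gadget vertices, so that repairing all violations costs nothing net. (If the construction is read as leaving the edges of $E_2$ in $G'$ rather than replacing them, the subtlety disappears entirely: $S'\cap V$ is then already independent in $G$ and only (a) is used; the bounds are unchanged.) Note that the constant that comes out of the argument is $2|E_2|$, twice the number of subdivided edges, which is precisely the amount by which $Opt$ shifts under $f'$ (in fact $Opt(G')=Opt(G)+2|E_2|$).
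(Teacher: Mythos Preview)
Your argument is correct. The paper states this as an observation without proof, so there is no argument to compare against; your gadget-by-gadget analysis is the standard one and is exactly what a full proof would contain.

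You are also right that the constant produced is $2|E_2|$ rather than $2m=2|E|$: since only the edges of $E_2$ are subdivided, the first bullet as literally stated (with $2m$) is actually too strong and cannot hold. This is a minor imprecision in the paper; the subsequent AP-reduction only needs a common additive shift bounded by a constant multiple of $Opt(G)$, and $2|E_2|\le 2m$ serves there just as well. Your handling of the two possible readings of $f'$---whether the original $E_2$-edges remain in $E'$ or are replaced by the path---is careful; the paper's formal description in fact keeps them (it sets $E'=E$ and then only adds edges), under which reading your simpler argument via (a) alone already yields the second bullet.
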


From the previous observation we see that solutions of $G $ and
$G' $ are simply shifted by an additive term of $2m $.
This extra term may look too large to preserve
approximability. However, the next observation shows that it only
represents a constant fraction of $Opt(G )$.

\begin{observation}
\label{obs2}
Let $G =(V ,E )=f(I_{sat}) \in \F $, with $n = |V |$, $m = |E |$,
$n_{var}$ the number of variables of $I_{sat}$, and $m_{cl}$ the
number of clauses of $I_{sat}$. As $G$ has maximum degree 4, $m\le
2n$.  By construction $n= 3m_{cl}$. For any 3-SAT instance it is known that
$\frac{7}{8}m_{cl} \le Opt(I_{sat})$. Finally, as $Opt(I_{sat}) =
Opt(G)$, one obtains that $ \frac{7}{48} m \le Opt(G)$.
\end{observation}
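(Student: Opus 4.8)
\noindent\textit{Proof plan.} The plan is to turn the inline justification into an explicit chain of inequalities, proving each link separately and then composing them. Write $Opt(\cdot)$ for the maximum number of simultaneously satisfiable clauses of a SAT instance and for the independence number of a graph. The target $\tfrac{7}{48}m\le Opt(G)$ will follow from four ingredients: (a) $\Delta(G)\le 4$, hence $m\le 2n$; (b) $n=3m_{cl}$; (c) the folklore bound $Opt(I_{sat})\ge\tfrac{7}{8}m_{cl}$; and (d) $Opt(G)=Opt(I_{sat})$, which is exactly Proposition~\ref{prop:strictreduc}.

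First I would check the degree bound~(a). Each vertex $v^j_t$ of $G$ belongs to exactly one clause-triangle, which contributes $2$ to its degree, and every other incident edge lies in $E_2$ and hence joins $v^j_t$ to an occurrence of the \emph{opposite} polarity of the variable $v^j_t$ represents. Since MAX-3-SAT(3) is normalised so that each variable occurs positively exactly twice and negatively exactly once, a positive occurrence has at most one such neighbour and the unique negative occurrence at most two; so $\deg_G(v^j_t)\le 4$, and the handshake identity gives $2m=\sum_{v\in V}\deg_G(v)\le 4n$, i.e.\ $m\le 2n$. Point~(b) is immediate from the definition of $f$, which creates one vertex-disjoint triangle per clause, so $n=|V|=3m_{cl}$.

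Then I would recall the standard probabilistic proof of~(c): under a uniformly random truth assignment a clause on three distinct variables is falsified with probability $1/8$, so the expected number of satisfied clauses is at least $\tfrac{7}{8}m_{cl}$, and therefore some assignment attains that value. (Degenerate clauses should be discarded first — a literal together with its negation is always true, and a repeated literal only shrinks the clause — and this normalisation is the only place where a little care is required.) Finally, combining~(d) with~(c),~(b) and~(a):
\[
Opt(G)=Opt(I_{sat})\ \ge\ \tfrac{7}{8}m_{cl}\ =\ \tfrac{7}{8}\cdot\tfrac{n}{3}\ =\ \tfrac{7}{24}\,n\ \ge\ \tfrac{7}{24}\cdot\tfrac{m}{2}\ =\ \tfrac{7}{48}\,m,
\]
which is the claimed inequality. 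I expect no real obstacle here: the statement is arithmetic on top of the structure of $f$ together with one classical fact about MAX-3-SAT, the only subtlety being that the $7/8$ bound must be applied to a cleaned-up instance.
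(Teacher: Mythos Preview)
Your proposal is correct and follows exactly the same chain of reasoning as the paper, which gives only the inline justification contained in the observation itself (degree bound $\Rightarrow m\le 2n$, then $n=3m_{cl}$, the $\tfrac{7}{8}$ folklore bound, and $Opt(G)=Opt(I_{sat})$); you simply flesh out each step. The one place you add something beyond the paper is the remark on degenerate clauses for the $\tfrac78$ bound, which is a harmless clarification rather than a departure in approach.
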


Let us recall the definition of AP reduction of given in~\cite{ashortguide}.
We voluntarily do not redefine here all the notations used in the
definition as they are rather explicit.
A reduction $(f,g)$ (between two optimization problems $A$ and $B$) is
said to be an AP-reduction (of parameter $\alpha$) if
\begin{enumerate}
\item For any $x \in I_A$, for any $r > 1$, $f(x,r) \in I_B$ is
  computable in time $t_f(|x|,r)$
\item For any $x \in I_a$, for any $r > 1$, and for any $y \in
  sol_B(f(x,r)), g(x,y,r) \in sol_A(x)$ is computable in time
  $t_g(|x|,|y|,r)$
\item For any fixed $r$, both $t_f(.,r)$ and $t_g(.,.,r)$ are bounded
  by a polynomial
\item For any $x \in I_A$, for any $r > 1$, and for any $y \in
  sol_B(f(x,r))$, $R_B(f(x,r),y) \le r \Rightarrow R_A(x,g(x,y,r)) \le 1+\alpha(r-1)$
\end{enumerate}

\begin{proposition}
There is an AP-reduction 
 from MIS on $\F $ to MIS on $\F' $.
\end{proposition}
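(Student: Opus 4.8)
The plan is to exhibit the pair $(f,g)$ and verify the four defining properties of an AP-reduction (we use that MIS is a maximization problem, so the performance ratio is $R_A(G,S)=Opt(G)/|S|$). For the forward map take $f(G,r):=f'(G)=G'$, ignoring the parameter $r$; since $f'$ is given by an explicit polynomial-time construction and $f'(G)\in\F'$ by definition of $\F'$, this settles property~(1) and the $f$-part of property~(3). For the backward map, $g(G,S',r)$ also ignores $r$ and outputs the larger of two independent sets of $G$: the set $S_1$ obtained from $S'$ by the polynomial-time procedure of Observation~\ref{obs1} (so $|S_1|\ge|S'|-2m$), and a greedy maximal independent set $S_2$ of $G$. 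Since $G$ has maximum degree $4$, $|S_2|\ge n/5$; and since an independent set of $G$ meets each of its $n/3$ clause-triangles in at most one vertex, $Opt(G)\le n/3$, whence $Opt(G)/|S_2|\le 5/3$. All of $S_1$, $S_2$ and the comparison are computed in polynomial time, giving properties~(2) and~(3).

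The real content is property~(4): I must find a constant $\alpha$ such that $R_B(f(G,r),S')\le r$, i.e.\ $|S'|\ge Opt(G')/r$, forces $R_A(G,g(G,S',r))\le 1+\alpha(r-1)$. Put $t=r-1$. Chaining $|S_1|\ge|S'|-2m$ with $Opt(G')\ge Opt(G)+2m$ (both from Observation~\ref{obs1}) gives $|S_1|\ge (Opt(G)-2mt)/r$, so whenever that numerator is positive
\[
R_A(G,g(G,S',r)) \le \frac{Opt(G)}{|S_1|} \le \frac{r\cdot Opt(G)}{Opt(G)-2mt}.
\]
Inserting the bound $m\le\tfrac{48}{7}Opt(G)$ of Observation~\ref{obs2} removes the instance-dependent quantity $m$ and leaves $R_A\le \tfrac{r}{1-\frac{96}{7}t}=\tfrac{1+t}{1-\frac{96}{7}t}$, valid for $t<7/96$; and an elementary rearrangement shows this is $\le 1+\alpha t$ as soon as $\alpha\ge \tfrac{103/7}{1-\frac{96}{7}t}$.

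The one mildly delicate point — and the only step that is not bookkeeping — is that this lower bound on $\alpha$ blows up as $t\uparrow 7/96$, so the first estimate alone cannot handle all $r>1$; this is exactly the role of the fallback solution $S_2$. I would fix a threshold $t_0\in(0,7/96)$ and argue in two regimes: for $t\ge t_0$ use $R_A\le Opt(G)/|S_2|\le 5/3$, which is $\le 1+\alpha t$ provided $\alpha\ge \tfrac{2/3}{t_0}$; for $0<t<t_0$ use the displayed estimate, which needs only $\alpha\ge \tfrac{103/7}{1-\frac{96}{7}t_0}$. Balancing the two conditions (they coincide at $t_0=\tfrac{14}{501}$) shows that $\alpha=\tfrac{167}{7}$ — or any larger constant — makes property~(4) hold for all $r>1$. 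It only remains to note that $Opt(G)-2mt$ is genuinely positive on the range $t<t_0$ where the first estimate is used (it is at least $Opt(G)(1-\frac{96}{7}t_0)=\tfrac{103}{167}Opt(G)>0$) and that the bounds $|S_2|\ge n/5$, $Opt(G)\le n/3$ are immediate for graphs of $\F$; this completes the AP-reduction.
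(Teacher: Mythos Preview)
Your argument follows the paper's exactly on the core computation: both of you take $f=f'$, invoke Observation~\ref{obs1} for the backward map, and then combine $|S|\ge |S'|-2m$, $Opt(G')\ge Opt(G)+2m$, and $m\le\frac{48}{7}Opt(G)$ from Observation~\ref{obs2} to control $Opt(G)/|S|$ in terms of $\epsilon=r-1$. The paper stops after obtaining $|S|\ge\frac{Opt(G)}{1+\epsilon}\bigl(1-\tfrac{96}{7}\epsilon\bigr)$ and declares the proof finished.

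Where you genuinely add something is the treatment of large $r$. The paper's estimate is only meaningful for $\epsilon<7/96$, so strictly speaking it does not exhibit a single $\alpha$ valid for \emph{all} $r>1$, as condition~(4) of the AP-reduction requires; the paper is content with the small-$\epsilon$ regime because that is all one needs for the intended ``no PTAS'' corollary. Your fallback --- computing a greedy maximal independent set $S_2$ and using the elementary bounds $|S_2|\ge n/5$ (max degree $4$) and $Opt(G)\le n/3$ (one vertex per clause-triangle) to get $R_A\le 5/3$ uniformly --- cleanly closes this gap and lets you produce an explicit constant $\alpha=167/7$. So your proof is a correct and more complete version of the paper's argument rather than a different route.
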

\begin{proof}
Point (1), (2) and (3) of the definition of an AP reduction are
clearly verified. Let us prove (4) with $r=1+\epsilon$.
Let $S' $ be a solution of an instance $G' $ such that
$|S' | \ge \frac{Opt(G' )}{(1+\epsilon)}$.
According to Observation~\ref{obs1}, we get a solution $S$ of size
at least $|S' |-2m =\frac{Opt(G' )}{1+\epsilon}-2m  \ge
\frac{Opt(G )}{1+\epsilon}-2m (\frac{\epsilon}{1+\epsilon})$.
Using Observation~\ref{obs2}, we deduce $|S| \ge
\frac{Opt(G )}{1+\epsilon}(1-2\epsilon\frac{48}{7})$, 
 which concludes the proof.
\qed\end{proof}

\begin{proposition}\label{prop:F}
$\F'  \subseteq \{\hg\}$-EPG.
\end{proposition}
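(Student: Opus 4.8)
The plan is to take the $\{\hg\hd\}$-EPG representation of $G\in\F$ constructed in Proposition~\ref{FinEPG} and show that after subdividing each edge of $E_2$ five times (the operation defining $f'$), every path can be reoriented to use only the $\hg$ shape. First I would recall the structure of the representation: each triangle vertex $v^j_t$ has a vertical part $ver(P_{v^j_t})=[0,i]$ on column $3j$ and a short horizontal part on row $i$, except that for at most one path per variable (the one corresponding to $not(x_i)$ when it sits at the rightmost clause $j_3$) we were forced to use an $\hd$ shape. The key observation, already flagged by the authors, is that the $\hd$ shape is only needed in that single case $C_{j_3}\ni not(x_i)$, where $hor(v^{j_3}_{t_3})=[3j_1,3j_3]$ together with $ver(v^{j_3}_{t_3})=[0,i]$ on column $3j_3$ forms a $\hd$.

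The core of the proof is to explain how subdividing resolves this. The edge $e=\{v^{j_1}_{t_1},v^{j_3}_{t_3}\}\in E_2$ (the one between the positive occurrence at $j_1$ and the negative occurrence at $j_3$) gets replaced by a path $v^{j_1}_{t_1}, w^e_1, w^e_2, w^e_3, w^e_4, v^{j_3}_{t_3}$ on five consecutive shared grid edges. I would now realize both $v^{j_1}_{t_1}$ and $v^{j_3}_{t_3}$ as $\hg$ paths (a vertical segment going up from row $0$, then turning left), so their horizontal parts extend to the \emph{left} of their columns rather than to the right; the chain of four new vertices $w^e_1,\dots,w^e_4$ is then routed along row $i$ (or a dedicated nearby row, using that the subdivision gives us slack) as short $\hg$ paths — each $w^e_k$ being essentially a little horizontal segment with a trivial or tiny vertical stub — so that consecutive ones in the list share exactly one grid edge and non-consecutive ones share none, and the two endpoints share exactly one grid edge with $w^e_1$ and $w^e_4$ respectively. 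I would draw the analogue of Figure~\ref{fig:noptasV1} for the subdivided instance to make the routing explicit, and check the three cases (depending on which of $C_{j_1},C_{j_2},C_{j_3}$ contains $not(x_i)$) now all collapse to $\hg$-only realizations because we may always choose to send horizontal parts leftward.

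The two things that need care, and where I expect the real work to be, are: (i) verifying that making all horizontal parts point left (and rerouting the subdivision vertices) does not accidentally create spurious shared grid edges between paths belonging to different variables or different clauses — here I would use that distinct triangles live on distinct columns $3j$, that horizontal parts of distinct variables live on distinct rows $i$, and that the subdivision chains can be placed on fresh rows between consecutive integer rows so they never collide; and (ii) checking that each subdivision chain realizes exactly a path $P_5$ (five edges, six vertices) in the EPG sense — i.e. that $w^e_k$ and $w^e_{k+1}$ share precisely one grid edge while $w^e_k$ and $w^e_{k+2}$ share none, which is a routine but necessary interval-overlap computation. Once these are in place, combining with Proposition~\ref{prop:strictreduc}, the AP-reduction above, and the MAXSNP-completeness of MAX-3-SAT(3) yields that MIS on $\{\hg\}$-EPG has no PTAS unless $\mathrm{P}=\mathrm{NP}$. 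I would finish with an explicit figure and a sentence-by-sentence case check rather than a formal induction, since the representation is completely concrete.
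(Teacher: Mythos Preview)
Your plan matches the paper's: both adapt the representation from Proposition~\ref{FinEPG} and use extra rows per variable so that the four subdivision vertices can relay each $E_2$-edge with one-bend paths of a single shape. The paper makes this concrete by allotting rows $3i,\,3i+1,\,3i+2$ to variable $x_i$ (setting $ver(P_{v^j_t})=[0,3i]$) and drawing the three cases explicitly in a figure.

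Two remarks. First, a $\hg=\ulcorner$ path has its horizontal arm to the \emph{right} of its corner (left would be $\hd=\urcorner$), so ``send horizontal parts leftward'' is backwards; this is harmless by mirror symmetry, but it makes the sentence ``all collapse to $\hg$-only because we may always send horizontals leftward'' misleading, since uniform orientation alone is precisely what fails \emph{without} subdivision. Second, your routing description (``little horizontal segments with tiny vertical stubs on row $i$ or a dedicated nearby row'') is where the actual content lives and is left implicit: the chain cannot sit on the endpoints' row without hitting $v^{j_2}_{t_2}$, and the two chains incident to the same negative literal must also avoid one another---this is exactly why the paper spends three rows per variable and a case figure on it. Your items (i) and (ii) correctly identify what must be checked, but carrying them out \emph{is} the proof.
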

\begin{proof}
We will adapt the structure used in Proposition~\ref{FinEPG}. 
Let $G' =(V',E') \in \F' $.
For each vertex $v^{j}_t \in V'$ corresponding to a variable
$x_i$ or its negation, we define a path $P_{v^j_t}$, with $ver(P_{v^j_t})=[0,3i]$ lying on column $3j$. 
As before, we already have the vertices and the edges corresponding to the triangles of $V'$.
It remains now to add the vertices and edges corresponding to the
subdivided edges.
Let $\{v^{j_1}_{t_1},v^{j_2}_{t_2},v^{j_3}_{t_3}\}$ be the vertices of $G$
corresponding to variable $x_i$ or to its negation, with $j_1 < j_2 <
j_3$.
There are now three cases, according to the position of the clause containing the
negative form of $x_i$ (recall that without loss of generality we
suppose that $not(x_i)$ appears exactly one time).
Thus, we create the appropriate paths as depicted in Figure~\ref{fig:noptasV2}.
Notice that now, for each variable $i$ in $\{1,\dots,n_{var}\}$, the required
subdivided edges will be created using only lines $3i$, $3i+1$ and $3i+2$ of
the underlying grid.
\begin{figure}[h!]
\begin{center}
\includegraphics[width=1\textwidth]{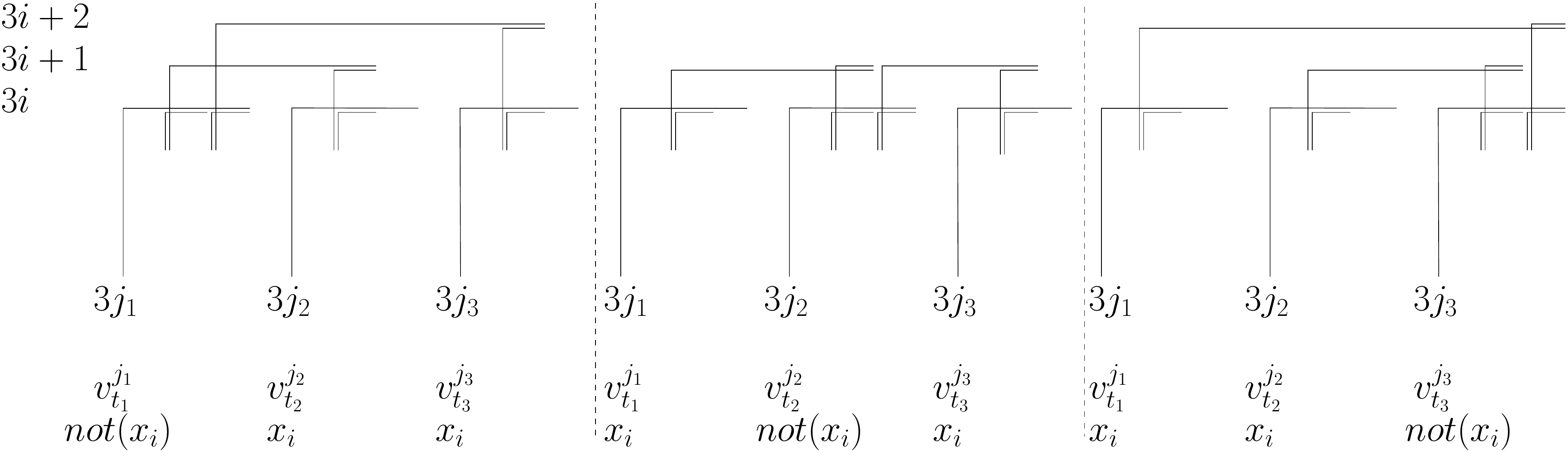}
\end{center}
\caption{The three possible cases according to the position of $not(x_i)$}
\label{fig:noptasV2}
\end{figure}
\qed\end{proof}

We are ready to state the main inapproximability result, whose
proof is now immediate. 
\begin{theorem}
\label{thm:noptas}
There is no PTAS for MIS on $\{\hg\}$-EPG unless P$=$NP, even if each path has its vertical 
part or its horizontal part of length at most $3$ (\ie~ $\forall
P, (|hor(P)| \le 3 \mbox{ or } |ver(P)| \le 3))$.
\end{theorem}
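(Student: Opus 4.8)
The plan is to assemble the ingredients already established in the excerpt into a single chain of approximation-preserving reductions, and then check the geometric side condition on the path lengths.

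\medskip

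\noindent\textbf{Approach.} First I would recall the composition of the reductions built above: Proposition~\ref{prop:strictreduc} gives a strict reduction from MAX-3-SAT(3) to MIS on $\F$; the AP-reduction proposition gives an AP-reduction from MIS on $\F$ to MIS on $\F'$; and Proposition~\ref{prop:F} tells us $\F' \subseteq \{\hg\}$-EPG. Since AP-reductions compose (and a strict reduction is in particular an AP-reduction with $\alpha=1$), we obtain an AP-reduction from MAX-3-SAT(3) to MIS on $\{\hg\}$-EPG. Because MAX-3-SAT(3) is MAXSNP-complete (APX-hard) by~\cite{papadimitriou2003computational}, it admits no PTAS unless P$=$NP, and an AP-reduction transfers this: a PTAS for MIS on $\{\hg\}$-EPG would yield, via $(f,g)$, a PTAS for MAX-3-SAT(3), a contradiction. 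This gives the no-PTAS statement.

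\medskip

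\noindent\textbf{The length condition.} The remaining point is the ``\ie'' clause: every path produced by the reduction satisfies $|hor(P)|\le 3$ or $|ver(P)|\le 3$. Here I would go back to the explicit construction in the proof of Proposition~\ref{prop:F}. The paths fall into two families. The ``variable paths'' $P_{v^j_t}$ have $ver(P_{v^j_t})=[0,3i]$, which is long, but their horizontal parts are the short intervals used to realize the subdivided edges; inspecting the three cases (and Figure~\ref{fig:noptasV2}) one sees each such horizontal part has length at most $3$, so $|hor(P)|\le 3$ for these. The ``subdivision paths'' $w^e_1,\dots,w^e_4$ are confined, for variable $i$, to rows $3i$, $3i+1$, $3i+2$; a path using only these three consecutive rows has vertical extent at most $3$, so $|ver(P)|\le 3$ for these. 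Hence every path meets the required bound. I would state this explicitly, referencing the figure, rather than recomputing each interval.

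\medskip

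\noindent\textbf{Main obstacle.} There is no serious obstacle — the theorem is essentially a bookkeeping corollary of the preceding propositions. The only point requiring care is making the length claim precise: one must double-check the third case in the construction of Proposition~\ref{prop:F} (where $not(x_i)$ sits in the last clause $C_{j_3}$), since in the earlier $\{\hg\hd\}$ construction (Proposition~\ref{FinEPG}) this was exactly the case forcing a $\hd$ bend and a long horizontal interval $[3j_1,3j_3]$; after subdivision this long horizontal run is broken up into short pieces spread over rows $3i,3i+1,3i+2$, and I would verify from the figure that indeed no surviving path simultaneously has horizontal and vertical extent exceeding $3$. Given that check, the proof is a two-line composition argument plus the geometric remark.
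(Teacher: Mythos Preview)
Your proposal is correct and matches the paper's own treatment: the paper simply declares the proof ``now immediate,'' and what you have written is precisely that immediate content --- compose the strict reduction of Proposition~\ref{prop:strictreduc}, the AP-reduction to $\F'$, and the inclusion $\F'\subseteq\{\hg\}$-EPG of Proposition~\ref{prop:F}, then invoke the MAXSNP-completeness of MAX-3-SAT(3). Your verification of the length clause (variable paths have long vertical but short horizontal part; subdivision paths are confined to the three rows $3i,3i+1,3i+2$ and hence have short vertical part) is exactly the inspection the paper leaves to the reader via Figure~\ref{fig:noptasV2}.
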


As MIS is APX-hard on 2-track graphs~\cite{schedulingsplit}, even when every vertex is represented by two intervals of
length $2$, it is natural to ask the same question here, \ie~
to determine if Theorem~\ref{thm:noptas} can be extended to paths whose vertical
and horizontal intervals have constant length.
Let us first notice that this restriction remains NP-hard.

\begin{proposition}\label{prop:NPhard}
MIS remains NP-hard on $\{\hg\}$-EPG graphs, even if all the paths have their
horizontal part and their vertical part of length at most 2 (\ie~ $\forall
P, |hor(P)| \le 2$ and $|ver(P)| \le 2$). 
\end{proposition}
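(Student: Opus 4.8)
The plan is to reduce in polynomial time from MIS on planar graphs of maximum degree $3$, which is well known to be NP-hard, preserving the optimum up to an additive constant that we can compute. Let $H$ be such a graph. First I would compute a planar orthogonal grid drawing of $H$ of polynomially bounded area: each vertex lies on its own grid point, each edge is a rectilinear curve bending only at grid points, two edges meet only at shared endpoints, and at each vertex the at most three incident edges leave in distinct directions among north, east, south, west. I would then rescale this drawing by a large enough constant factor so that each vertex owns a small empty box around its grid point, each edge owns a thin empty tube around its curve, and all these boxes and tubes are pairwise disjoint.

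Next come the gadgets. Each vertex $v$ of $H$ is represented by a single $\hg$-path $P_v$, placed at the grid point of $v$, whose horizontal part and whose vertical part both have length $2$. Such a path meets exactly four grid edges, and a direct check shows that there are four $\hg$-paths of horizontal and vertical length $2$, each sharing with $P_v$ exactly one of these four grid edges, pairwise sharing no grid edge, and leaving $P_v$ respectively towards the north, east, south and west; we keep those corresponding to the at most three directions actually used at $v$. Each edge $uv$ of $H$ is then represented by a chain of $\hg$-paths of horizontal and vertical length $2$ following the rectilinear curve of $uv$, one step per grid unit, starting with the attaching path of $P_u$ in the starting direction and finishing with an attaching path of $P_v$; here one uses the elementary fact that a chain of such small $\hg$-paths can follow a straight run in any of the four directions and turn at any bend, so that consecutive chain paths share exactly one grid edge while non-consecutive ones share none. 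If necessary a bounded detour inside the tube makes the chain of $uv$ have odd length $\ell_{uv}\ge 3$. Because the drawing is planar and was rescaled for clearance, everything stays in its own box or tube, so two paths of the construction share a grid edge if and only if they are consecutive in a common chain, or one is an end path of a chain and the other the corresponding $P_v$. Call $\widehat H$ the resulting graph: it lies in $\{\hg\}$-EPG, all its paths have horizontal and vertical parts of length at most $2$, and combinatorially it is exactly $H$ with each edge $uv$ subdivided $\ell_{uv}-1$ times, an even number.

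It then remains to track the independence number, which is routine: subdividing one edge of a graph $2k$ times raises the independence number by exactly $k$, with optimal independent sets of the two graphs computable from one another in polynomial time. Summing over all edges, $Opt(\widehat H)=Opt(H)+\sum_{uv\in E(H)}\frac{\ell_{uv}-1}{2}$; since the additive term is computable in polynomial time from the drawing, this finishes the reduction.

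The independence-number accounting is the easy part. The step that needs care is the geometric realization: showing, by a finite case analysis on when two small $\hg$-paths share a grid edge, that (i) a single $\hg$-path with both parts of length $2$ does admit three pairwise non-adjacent private neighbours leaving in prescribed directions, and (ii) a chain of such paths can follow an arbitrary rectilinear curve, including through its bends and direction changes, as a genuine induced path that interferes with no other gadget. Planarity of $H$ (hence no crossings) together with the clearance from rescaling is what lets this bookkeeping go through.
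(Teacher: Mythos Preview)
Your proposal is correct and follows essentially the same route as the paper: reduce from MIS on bounded-degree planar graphs via an orthogonal grid embedding, subdivide every edge an even number of times, and realize the resulting graph by chains of short $\ulcorner$-paths, tracking the independence number by the standard subdivision formula. The only cosmetic difference is that you start from maximum degree~$3$ while the paper starts from degree~$4$; the paper in fact observes that a single $\ulcorner$-path of total length~$4$ (hence both parts of length~$2$) already accommodates four pairwise non-adjacent attaching paths, so your restriction to three directions is a harmless simplification rather than a genuinely different idea.
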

\begin{proof}
We will only sketch the proof as it is a straightforward modification
of the NP-hardness proof of \cite{wadsMIS}. The reduction of
\cite{wadsMIS} is from MIS on a planar graph $G$ with maximum degree
four.  They first recall that we can construct in polynomial time an
embedding in a grid such that edges of $G$ are piece-wise linear
curves following the grid lines (as depicted
Figure~\ref{fig:reducfromplanar} on the left). Then, they subdivide
each edge of $G$ (by adding $a_e$ vertices inside edge $e$ , with
$a_e$ an even integer) to get another planar graph $G'$, and observe
that $Opt(G')=Opt(G)+\sum_{e}\frac{a_e}{2}$. Finally, they draw graph
$G'$ as a $B_1$-EPG graph.

\begin{figure}[h!]
\begin{center}
\includegraphics[width=1\textwidth]{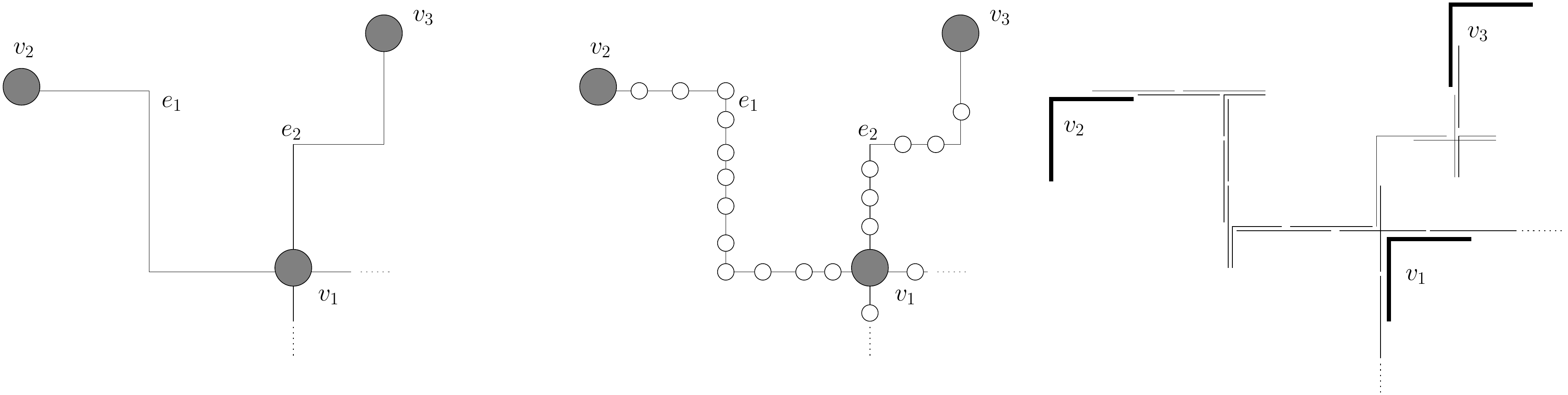}
\end{center}
\caption{Reduction from planar graph using only $\hg$ shapes whose
  both direction are bounded. Here we have $a_{e_1}=12$ and $a_{e_2}=6$.}
\label{fig:reducfromplanar}
\end{figure}

Thus, it is sufficient to notice that by subdividing even more each
edge we can get a graph $G''$ such that
$Opt(G'')=Opt(G)+\sum_{e}\frac{a'_e}{2}$ (with $a'_e \ge a_e$), and
that we can draw using only $\hg$ shapes of horizontal and vertical
parts of length at most $2$ (see Figure~\ref{fig:reducfromplanar}).
Notice that for a vertex of degree 4, the associated path necessarily
has total length at least 4, as its neighbors are pairwise non
adjacent.
\qed\end{proof}

As shown in the following theorem, it is not possible to improve
the inapproximability, even if only one direction
(but always the same) is bounded, and even if the four
shapes are allowed.

\begin{theorem}
\label{thm:noptas2}
MIS admits a PTAS on $B_1$-EPG graphs where all the paths have a
horizontal part of length at most $c$, where $c$ is a constant (\ie~
$\forall P, |hor(P)| \le c$). More precisely, we can find an independent set of size 
at least $(1-\epsilon)|OPT|$ in $\mathcal{O}^*(n^{3c\frac{1}{\epsilon}})$ time.
\end{theorem}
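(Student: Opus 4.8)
The plan is to use a standard shifting technique over the columns of the grid. Since every path $P$ satisfies $|hor(P)| \le c$, each path occupies at most $c+1$ consecutive columns of the grid. First I would observe that the grid has polynomially many columns, say $N$; label them $1, \dots, N$. For a parameter $k$ (to be chosen as roughly $c/\epsilon$), and for each residue $s \in \{0, 1, \dots, k-1\}$, consider the set of "kill columns" $K_s = \{ j : j \equiv s \pmod{k} \}$ together with the $c$ columns immediately following each such kill column — call this band of $c+1$ columns around each kill column the "deleted strip". Removing all paths that intersect one of the deleted strips for residue $s$ breaks the instance into independent blocks, each block living in a window of at most $k$ consecutive columns.

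The key step is then to solve MIS exactly on each block in polynomial time. A block spans at most $k$ consecutive columns; I claim that any path entirely contained in such a block, having horizontal part of length at most $c$, is essentially confined to $O(k)$ columns, and the number of vertical "rows" it can touch is at most $n$ (only rows that are actually used by some path matter). The main engine will be a dynamic programming / sweep over the $k$ columns of the block: when we sweep column by column, the "interface" we need to remember is which horizontal-edge slots of the current column boundary are already occupied by chosen paths; since a path has horizontal length $\le c$ it can only affect the state for $c$ columns, so the relevant state has size bounded by a polynomial of degree $O(c)$ — this yields $\mathcal{O}^*(n^{O(c)})$ per block, hence per residue class, hence overall (the constant $3c/\epsilon$ in the statement presumably comes from being slightly generous in this DP bookkeeping and in the block width $k = 3c/\epsilon$).

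Finally I would do the accounting for the approximation ratio. Fix an optimal independent set $OPT$. A path of $OPT$ is destroyed for residue $s$ only if it intersects the deleted strip around some kill column $j \equiv s \pmod k$; since the path spans $\le c+1$ columns, it can intersect deleted strips for at most $c+1$ distinct residues $s$ (in fact the relevant bound is $c+1 \le $ something like $c$ residues after tightening). Hence, averaging over the $k$ choices of $s$, there is a residue $s^*$ for which at most $\frac{c+1}{k}|OPT|$ paths of $OPT$ are destroyed; for this $s^*$, the optimum of the surviving instance is at least $(1 - \frac{c+1}{k})|OPT| \ge (1-\epsilon)|OPT|$ when $k \approx c/\epsilon$. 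Since we solve each residue-$s$ instance optimally (block by block) and return the best over all $s$, we output an independent set of size $\ge (1-\epsilon)|OPT|$. The main obstacle I anticipate is getting the per-block dynamic programming exactly right: one has to be careful that the state records enough about partially-entered horizontal segments crossing the current sweep line so that edge-sharing (not mere crossing) is detected correctly, and that the number of such states is genuinely $n^{O(c)}$ rather than exponential in $k$; bounding the number of relevant grid rows by $n$ and exploiting that each path touches only $\le c$ columns are the facts that make this go through.
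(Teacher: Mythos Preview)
Your shifting argument over columns is essentially the same as the paper's, and the averaging step is fine (modulo a minor arithmetic slip: with deleted strips of width $c+1$, a path of horizontal extent $\le c+1$ can be hit for up to about $2c+1$ residues, not $c+1$; the paper avoids this by deleting only paths that cross a single grid edge $[d+akc,d+akc+1]$, so that each path is hit for at most $c$ residues).

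The genuine gap is in your dynamic program. You propose to sweep \emph{column by column} and record ``which horizontal-edge slots of the current column boundary are already occupied by chosen paths''. That state is a subset of the rows, and an independent set can have arbitrarily many horizontal segments on distinct rows crossing the same column boundary (take $n$ purely horizontal paths on $n$ different rows, all overlapping a common column). So the state space is $2^{\Theta(n)}$, not $n^{O(c)}$; the fact that each path touches at most $c$ columns does not bound how many chosen paths are simultaneously ``in flight''. This also explains why your claimed per-block cost $n^{O(c)}$ (independent of $\epsilon$) cannot be right: a block has $\Theta(c/\epsilon)$ columns, and the exponent must reflect the block width, as in the theorem's $n^{3c/\epsilon}$.

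The paper's fix is to sweep in the \emph{orthogonal} direction: process rows from bottom to top inside a block with $\beta\approx c/\epsilon$ columns. The state is one integer per column, namely the highest row already occupied on that column by the vertical part of a chosen path; this is $n^{\beta}$ states. At each step one guesses the row $l$ and the (at most $2\beta-1$) chosen paths whose horizontal part lies on row $l$, checks that each path's vertical part starts above the recorded height on its column, and updates those heights. This yields $\mathcal{O}^*(n^{3\beta})=\mathcal{O}^*(n^{3c/\epsilon})$ as stated. If you simply flip your sweep from columns to rows and keep one height per column, your outline becomes correct and coincides with the paper's proof.
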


We will prove Theorem~\ref{thm:noptas2} using the classical Baker shifting
technique~\cite{Bak94}.
We could expect that such a PTAS is already known, as MIS has been
widely studied on intersection graphs (see for example the PTAS of~\cite{maxISpseudodisks}
for MIS on intersection of Pseudo-Disks).
However, as two paths can cross without creating an edge in the EPG
model, 
this requires an ad-hoc adaptation of the shifting technique.

\begin{proof}
Let $G=(V,E)$ be a $B_1$-EPG graph where all the paths have
an horizontal part of length at most $c$.
Without loss of generality, let us suppose that all the paths are
drawn in the positive quadrant of the plane (\ie~with positive
coordinates), and that the underlying grid is a square of size $s \le
poly(n)$, where $n = |V|$.

Let us consider a given optimal solution $OPT$.
Let $k \in \N^*$. Our goal is to get a solution of size at least
$|OPT|(1-\frac{1}{k})$.

For any integers $i\in \N$, let $X_i$ be the set of paths $P$ of $G$ such that $[i,i+1]\subseteq hor(P)$.
For any $d \in \{0,\dots,kc-1\}$, let $R_d = \bigcup_{a \in \N}X_{d+akc}$ (see Figure~\ref{fig:baker}).
Let $OPT_d = OPT \bigcap R_d$, \ie~$OPT_d$ is the set of all paths of
$OPT$ which cross one of the vertical strips, between column $d+akc$ and $d+akc+1$ for some $a\in \N$.

\begin{figure}[h!]
\begin{center}
\includegraphics[width=0.7\textwidth]{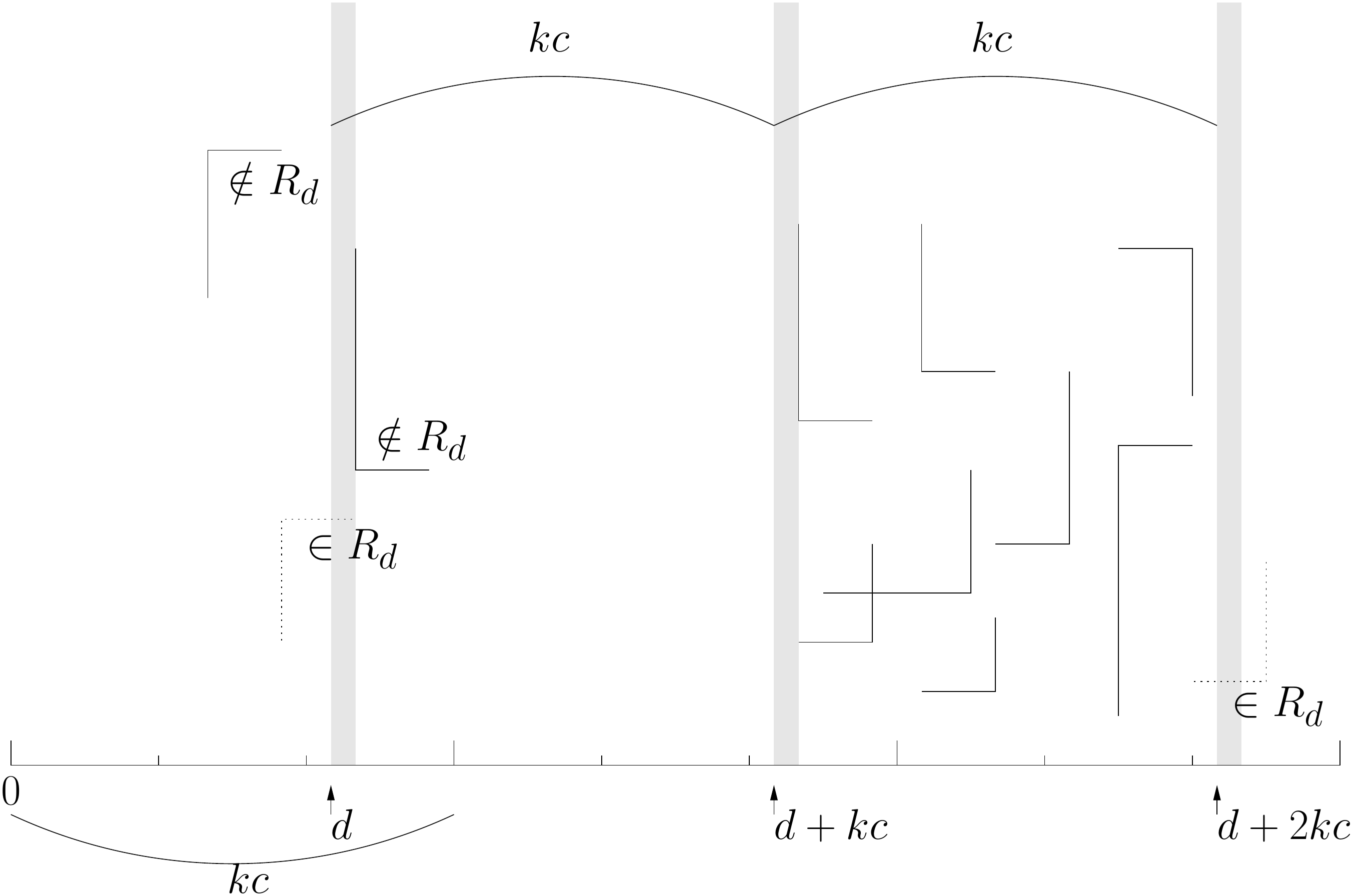}
\end{center}
\caption{$R_d$ is the set of paths crossing one of the gray strips. We solve MIS exactly in all the
  white strips, \ie~between column $d+ack+1$ and $d+(a+1)ck$ with $a \ge -1$.}
\label{fig:baker}
\end{figure}

Let $d_0 = min_{d}|OPT_d|$.
Observe first that $|OPT_{d_0}| \le \frac{1}{k}|OPT|$.
Indeed, $\sum_{d=0}^{kc-1}|OPT_d| \le c |OPT|$, as each vertex $v^*$ of $OPT$
belongs to at most $c$ different $R_l$, 
and thus $kc|OPT_{d_0}| \le c |OPT|$.

Thus, for each $d$ the algorithm solves optimally the problem on
$G\setminus R_d$ and gets a solution $A_d$. Then, it returns the
largest solution among the $A_d$.
Let us now see how we solve MIS optimally on $G'=G\setminus R_d$.

Observe that removing the vertices of $R_d$
disconnects the graph. Indeed, paths at the left of the strip (\ie~$P$
such that $hor(P)\subseteq [0, d+akc]$) cannot
be connected to paths on the right of the strip (\ie~$P$
such that $hor(P)\subseteq [d+akc+1, +\infty]$).
Thus, each connected component of $G'$ correspond to a $B_1$-EPG graph defined on a grid
with $\beta=kc$ columns. Naturally, we compute an optimal
solution on $G'$ by taking the union of optimal solutions of each
connected component.

It remains now to solve MIS on a $B_1$-EPG graph where the underlying
grid has a constant number $\beta$ of columns.
To that end we write a dynamic programming algorithm that parses the
input from bottom to top, and only remembers for each column the highest point already
occupied by a path.
More precisely, let us design an algorithm $DP(l_{min},y_1,\dots,y_{\beta})$
that computes a maximum independent set of paths $S$ with the extra
constraints that
\begin{itemize}
\item for any $P \in S$, $hor(P)$ lies on a row $r \ge l_{min}$
\item for any $P \in S$, for any $1 \le i \le \beta$, if $ver(P)=[t_1,t_2]$ (\ie~the lowest point of the vertical part is on row $t_1$) and lies on column $i$, then $t_1 > y_i$ (\ie~for all $i$, all the
  vertical parts lying on column $i$ must be above $y_i$)
\end{itemize}
To compute such a set $S$, we guess
in an optimal solution (satisfying the extra constraints) the index
$l^* \ge l_{min}$ of the lowest row used, and we also guess all the
paths of this optimal solution whose
horizontal part lie on $l^*$. More precisely  (see Figure~\ref{fig:progdyn}),
for any $l \ge l_{min}$, for any subset of paths $S'$ such that
\begin{itemize}
\item $S'$ is an independent set
\item for any $P \in S'$, $hor(P)$ lies on a row $l$
\item for any $P \in S'$, for any $1 \le i \le \beta$, if $ver(P)=[t_1,t_2]$ and lies on column $i$, then $t_1 > y_i$
\end{itemize}
$DP(l_{min},y_1,\dots,y_{\beta})$ considers the solutions of type $S'
\bigcup DP(l+1,y'_1,\dots,y'_{\beta})$, and keep the best.  Notice
that a path that has no horizontal part and a vertical part of the
form $[l,l+d]$ will be considered in set $S'$ (\ie~in such a case its
horizontal part is assumed to be on row $l$).  It remains to define
the $y_i'$. If no path of $S'$ has its vertical part lying on column
$i$, then $y_i' = y_i$. Otherwise, $y_i' = max\{t_2| \exists P \in S'$
such that $ver(P)=[t_1,t_2]$ and lies on column $i$
(notice that there can be two paths lying on the same column, as
depicted Figure~\ref{fig:progdyn}).

\begin{figure}[h!]
\begin{center}
\includegraphics[width=0.9\textwidth]{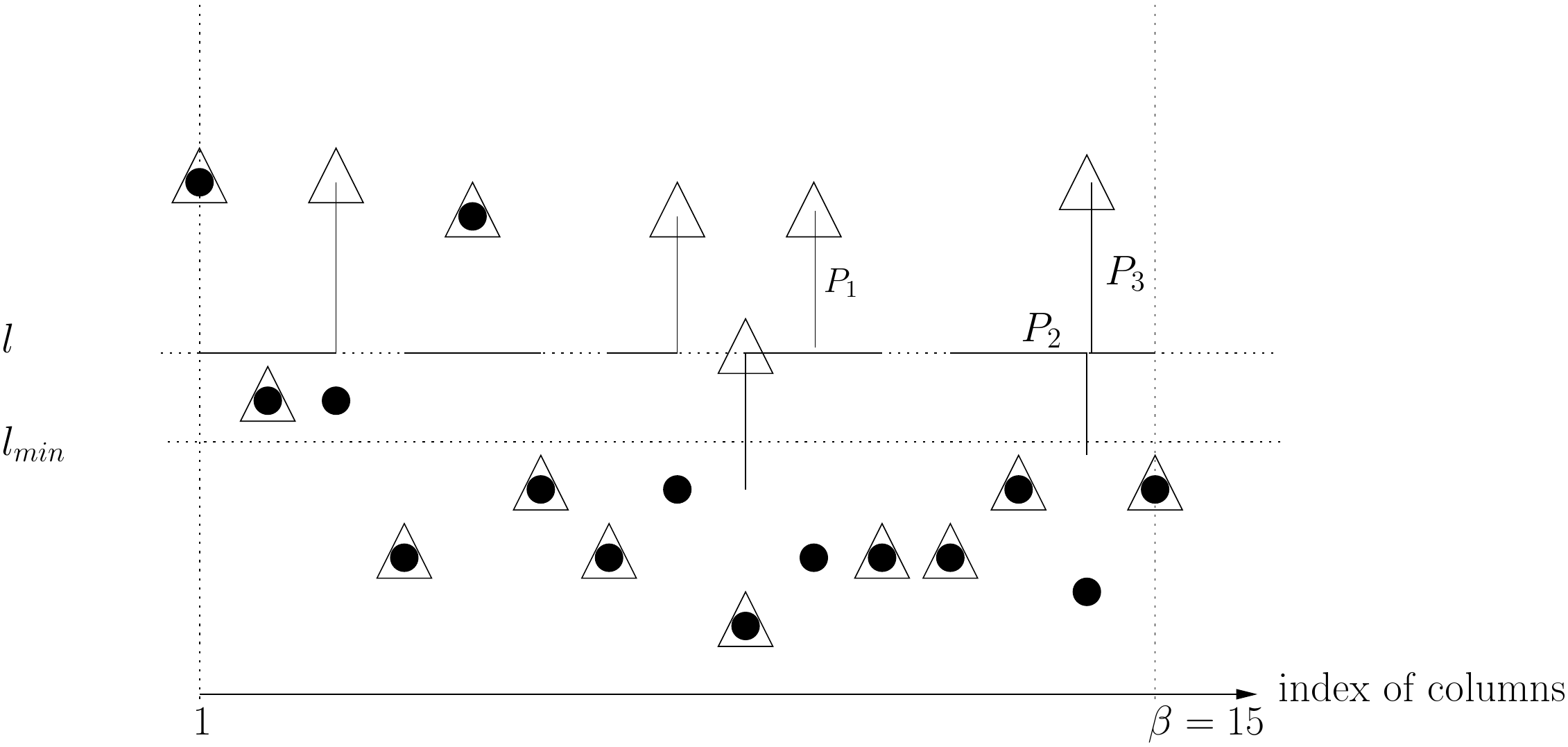}
\end{center}
\caption{Example of one of the solutions enumerated by
  $DP(l_{min},y_1,\dots,y_{\beta})$. The $y_i$ values are represented
  by black circles, and the new $y_i'$ values are represented
  by triangles. The depicted set $S'$ contains $7$ paths, whose
  horizontal parts lie on row $l$ (we consider that $P_1$ has an
  horizontal part lying on row $l$, even if it has no horizontal
part).
Observe that both $P_2$ and $P_3$ have a vertical part lying on column
$\beta-1$, and thus $y'_{\beta-1}$ is set according to the highest
path ($P_3$ here).
Observe also that $y_i$ is not necessarily lower than $l_{min}$.}
\label{fig:progdyn}
\end{figure}

Let us now bound the complexity of $DP$. As $l_{min}$ and any $y_i$ lies in
$\{1,\dots,n\}$, the total number of input is bounded by
$n^{\beta+1}$. Concerning the number of branches examined by the
algorithm, we note that there are at most $n$ possible values for
$l$, and that $|S'| \le 2\beta-1$ (in the case where
$S'$ contains one vertical path on every
column and without horizontal part, and $\beta-1$ horizontal paths
of length $1$ on row $l$ and without vertical part). Thus, there is at most $n^{2\beta-1}$
branches, leading to an overall complexity in
$\mathcal{O}^*(n^{3\beta})$.  As $\beta = kc$ and the
approximation ratio is $1-\frac{1}{k}$, this provides the desired
complexity for the PTAS and concludes the proof of
Theorem~\ref{thm:noptas2}.



\qed\end{proof}





\section{Fixed parameter tractability}\label{sec:param}
In this section we consider the MIS problem in the standard
parameterization, and thus we consider the decision problem $OPT \le k$ ? parameterized by $k$.  
Recall first that as $B_1$-EPG $\subseteq$ $2$-track $\subseteq$
$B_3$-EPG and as it is proved in ~\cite{jiang-unit2track} that MIS is $W_1$-hard in
unit $2$-track graphs, we already know that MIS is $W_1$-hard in
$B_3$-EPG graphs. 

In this section we prove that MIS is FPT on $B_1$-EPG restricted to
only three shapes of paths, and $W_1$-hard on $B_2$-EPG graphs. The status for
general $B_1$-EPG (with the four shapes) is left open. 

\subsection{FPT algorithm for MIS on $B_1$-EPG with three shapes}
The principle of our FPT algorithm
is to repeat the following process: locate a set of $17k^2$ paths of the
instance which contains an element of an optimal solution, and then
branch on these $17k^2$. If the instance is positive (\ie has a MIS of
size at most $k$) then the algorithm will find a solution after at
most $17k^{2k}$ choices. Before detailing the algorithm, we need the
following definitions and lemmas which will useful to find an element
of an optimal solution.

We refer the reader to Figure~\ref{fig:FPTgroupe} for the next
definitions.  Notice that a purely vertical path $P$ with
$ver(P)=[a,b]$ lying on column $c$ is considered as a path of shape
$\bg$ with $hor(P)=[c,c]$ lying on row $a$.  In the same way, a purely
horizontal path $P$ with $hor(P)=[a,b]$ lying on row $r$ is considered
as a path of shape $\bg$
with $ver(P)=[r,r]$ lying on column $a$. \\
Notice also that if $G$ is a $B_1$-EPG with a path $P$ entirely
containing another path $P'$ then $Opt(G)=Opt(G\setminus \{P\})$ for
the MIS problem. Indeed, if an independent set $I$ of $G$ contains $P$
then $(I\setminus \{P\})\cup \{P'\}$ is also an independent set of $G$
with the same size than $I$. Thus in every $B_1$-EPG instance of MIS
we will consider, we assume that there is no path entirely contained in
another path.\\

  A subset $G_*$ of paths is a \textbf{group} if and only if all paths
  of $G_*$ have the same shape and the same corner (ie. $\forall \
  P_1,P_2 \in G_*\ cor(P_1) = cor(P_2)$).

  Let $G_*$ be a group of paths of shape $\bg$.  As we can remove from
  the instance any path $P$ that contains entirely another path $P'$,
  we have for any $P, P' \in G_*$, $hor(P) \subseteq hor(P') \Rightarrow
  ver(P') \subseteq ver(P)$.  We say that \textbf{$P \in G_*$ is the
    rightmost path of $G_*$} (resp. topmost) if and only if for every
  $P' \in G_*$ we have $hor(P') \subseteq hor(P)$ (resp. $ver(P') \subseteq
  ver(P)$).
  We also adapt the definition for the three other shapes of groups
  (for example for a group of shape $\hd$, we define the leftmost and
  the downmost path).

  Let $S$ be an independent set of a $B_1$-EPG.  Let $x \in S$ with
  $hor(x)=[a,b]$, lying on row $r$.  We say that \textbf{the right
    side of $x$ is free in $S$} (or simply that \textbf{the right side
    of $x$ is free}) if there is no $y \in S$, $y \neq x$, lying on
  row $r$ such that $hor(y)=[c,d]$ and $c\ge b$.
  We define in the same way that the left, up, and down side of $x$ is
  free in $S$.

The \textbf{main sides} of a path are the ones given by its shape:
a path of shape $\bg$ (resp. $\hg$, $\hd$ or $\bd$) has main sides up and right
(resp. down and right, down and left or up and left).\\

The two following lemmas provide conditions allowing us to guess in FPT time a
vertex of an optimal solution.
\begin{lemma}
\label{lemmeFPT1}
Let $r$ be a fixed row, and let $X$ be a subset of paths such
that every $P \in X$ has shape $\bg$ and $hor(P)$ lies on row $r$.
We can construct in polynomial time a set $f_1(X)=X'\subseteq X$ with $|X'| \le k$ verifying the following property.

If there exists an independent set $S$, $|S|=k$ and a $x^* \in S \cap X$ such that the
right side of $x^*$ is free in $S$, then there exists $S'$, $|S'| = k$ such that $S' \cap X' \neq \emptyset$.
\end{lemma}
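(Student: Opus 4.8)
The plan is to build $X'$ from the ``rightmost paths'' of a bounded number of groups, and to argue by a swapping step followed by a pigeonhole argument on columns.

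\textbf{Construction of $X'$.} Partition $X$ into its groups; since all paths of $X$ have shape $\bg$ with $hor$ lying on row $r$, a group is exactly determined by its corner column. By the no-containment assumption, inside each group the path with the largest right endpoint of $hor$ is unique; call it the \emph{rightmost path} of the group. If $|X|\le k$ set $X'=X$; otherwise let $X'$ be the set of rightmost paths of the $k$ groups whose corner column is largest (or of all groups, if there are at most $k$ of them). Then $X'\subseteq X$, $|X'|\le k$, and $X'$ is computable in polynomial time.

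\textbf{Reduction to a rightmost path.} First I would show we may assume $x^*$ is the rightmost path of its own group. Let $P'$ be that rightmost path, with $hor(x^*)\subseteq hor(P')$, so the no-containment assumption forces $ver(P')\subseteq ver(x^*)$. Using that the right side of $x^*$ is free in $S$, every $y\in S\setminus\{x^*\}$ that has a horizontal segment on row $r$ must have it entirely to the left of $cor(x^*)$: otherwise it would either share a horizontal edge with $hor(x^*)$, or contradict freeness of the right side of $x^*$. From this together with $ver(P')\subseteq ver(x^*)$ one checks that $S':=(S\setminus\{x^*\})\cup\{P'\}$ is again an independent set of size $k$ (a conflict of $P'$ with $S\setminus\{x^*\}$ would be inherited by $x^*$), and that the right side of $P'$ is free in $S'$. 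Hence $(S',P')$ is again a valid pair, so we may assume $x^*$ is the rightmost path $Q$ of its group; let $a$ be its corner column.

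\textbf{Main argument.} If the group of $x^*$ is one of the (at most $k$) groups selected for $X'$, then $x^*\in X'$ and we are done with $S'=S$. Otherwise the $k$ selected groups all have corner column $>a$; let $Q_1,\dots,Q_k$ be their rightmost paths, with pairwise distinct corner columns $c_1,\dots,c_k$, all $>a$. The key observation is that, because $c_i>a$ and the right side of $x^*$ is free, $Q_i$ can share an edge with $S\setminus\{x^*\}$ \emph{only} through a vertical edge on column $c_i$: its horizontal part is $[c_i,\cdot]$ on row $r$, which is disjoint from every horizontal segment of $S\setminus\{x^*\}$ on row $r$ (those lie left of $a$, by the previous paragraph), and a horizontal segment cannot share an edge with a vertical one. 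Each path of $S\setminus\{x^*\}$ has at most one vertical segment, hence ``blocks'' at most one column, and $|S\setminus\{x^*\}|=k-1<k$; so some column $c_i$ is unblocked, making $(S\setminus\{x^*\})\cup\{Q_i\}$ independent. It has size $k$, since $Q_i\notin S$: a path on row $r$ with corner column $c_i>a$ either shares a horizontal edge with $hor(x^*)$ or has its left endpoint $\ge b$ and thus violates freeness of the right side of $x^*$. This $S'$ meets $X'$, which finishes the proof.

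\textbf{Main obstacle.} The delicate parts are the ``key observation'' — pinning down exactly why passing to a path whose corner column is at least that of $x^*$ eliminates all horizontal conflicts — and carrying out the reduction to ``$x^*$ rightmost of its group'' so that freeness of the right side is genuinely preserved. Once those are settled, the pigeonhole of $k$ distinct columns against the $k-1$ other paths of $S$ is immediate.
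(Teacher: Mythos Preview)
Your proof is correct and follows essentially the same route as the paper: partition $X$ into groups by corner column, take as $X'$ the rightmost path of each of the $k$ groups with largest corner column, handle the case where $x^*$ lies in one of these groups by swapping $x^*$ for the group's rightmost path (your ``reduction'' step is exactly the paper's first case), and otherwise run the pigeonhole of $k$ distinct columns against the $k-1$ remaining paths of $S$. You spell out a couple of verifications (preservation of right-freeness after the swap, $Q_i\notin S$) that the paper leaves implicit, but the argument is the same.
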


\begin{proof}
  Let $\{a_i\ :\ 1 \le i \le i_{max}\}=\{j\ :\ \exists P \in X$ with
  $ver(P)$ lying on column $j\}$ be the set of columns used by the
  vertical parts of paths of $X$. Let us suppose without loss of
  generality that $a_i > a_{i+1}$ for any $i$.  We partition $X$ into
  (maximal) groups $G_i$ for $1 \le i \le i_{max}$: we define $G_i$ as
  the set of all paths of $X$ having their corner at coordinates
  $c_i=(a_i,r)$ (see Figure~\ref{fig:FPTgroupe}).  We denote by $i^*$
  the index of the group of $x^*$ (ie.  $x^* \in G_{i^*}$).  Let
  $X'=\{P'_1,\dots,P'_{min(i_{max},k)}\}$, where $P'_i$ is the
  rightmost path of $G_{i}$. Finally, let $hor(P'_i)=[a_i,b_i]$ and
  $hor(x^*)=[a^*,b^*]$.

\begin{figure}[h!]
\begin{center}
\includegraphics[width=0.9\textwidth]{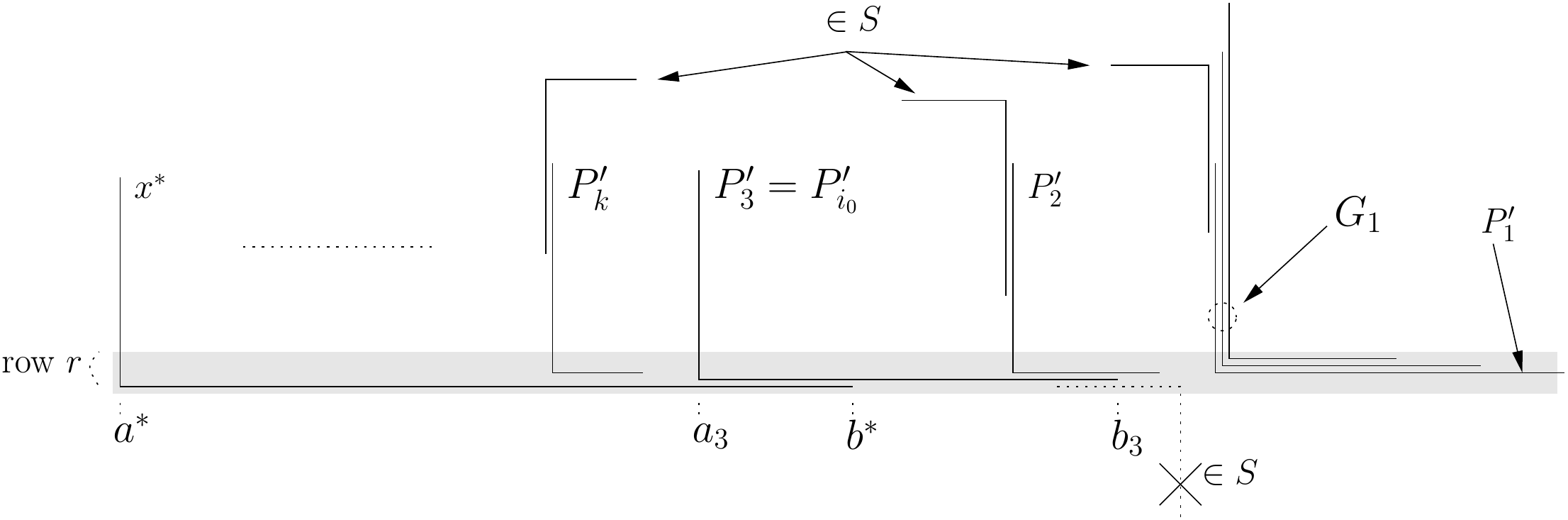}
\end{center}
\caption{Example of a group and of the case where the right of $x^*$ is free, and $x^* \in G_{i^*}$ with $i^* > k$ in Lemma~\ref{lemmeFPT1}. }
\label{fig:FPTgroupe}
\end{figure}

Suppose first that $i^* \le k$.
As the right side of $x^*$ is free in $S$, we can remove $x^*$ from
$S$ and chose $P'_{i^*}$ to get another solution $S'$ of size $k$.

Suppose now that $i^* > k$.  Notice first that in this case by
construction of the groups, we have $a^* < a_i$ for any $i \le k$ (as
we collected first paths with rightmost left endpoint). If there
exists $x\in S\setminus \{x^*\}$ with $hor(x)$ lying on row $r$,
then we have $hor(x)=[a,b]$ with $b<a^*$ (because the right side of
$x^*$ is free in $OPT$). So we have $b<a_i$ for any $i\le k$ and $x$
does not intersect paths of $X'$. Equivalently paths of $X'$ can only
intersect $S \setminus \{x^*\}$ by sharing vertical edges of the
underlying grid.
Thus as $|S|= k$, there exists $i_0 \le
k$ such that $(S \setminus \{x^*\}) \cup \{P'_{i_0}\}$ is an
independent set (\ie~we swap $x^*$ and $P'_{i_0}$ to get $S'$),
and $X'$ satisfies the claimed property.

\qed\end{proof}


\begin{lemma}
\label{lemmeFPT2}
Let $r$ be a fixed row, and let $X$ be a subset of paths such
that every $P \in X$ has shape $\bd$ and $hor(P)$ lies on row $r$.
We can construct in polynomial time a set $f_2(X)=X'\subseteq X$ with $|X'| \le k$ verifying the following property.

If there exists an independent set $S$, $|S|=k$ and a $x^* \in S \cap X$ such that the
right side \textbf{and} the top side of $x^*$ are free in $S$, then there exists $S'$, $|S'| = k$ such that $S' \cap X' \neq \emptyset$.
\end{lemma}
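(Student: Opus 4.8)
The plan is to mirror the proof of Lemma~\ref{lemmeFPT1}, the only genuine change being the choice of representative of each group and the way the groups are ordered. As before I keep the standing assumption that no path of the instance is contained in another, so that inside a group of paths sharing a corner the horizontal parts are linearly ordered by inclusion and, by contraposition, the vertical parts are linearly ordered in the reverse direction. A path $P$ of $X$ has shape $\bd$, so its corner is the right endpoint of $hor(P)=[a,b]$ and the lower endpoint of $ver(P)=[r,d]$, with $ver(P)$ lying on column $b$. I would partition $X$ into the maximal groups $G_1,\dots,G_{i_{max}}$ determined by the corner, let $P''_i$ be the topmost path of $G_i$ (the one whose vertical part contains all the others, equivalently whose horizontal part $[a''_i,b_i]$ is contained in all the others), and --- here is where the argument departs from Lemma~\ref{lemmeFPT1} --- order the groups so that $a''_1\ge a''_2\ge\dots\ge a''_{i_{max}}$, i.e.~by decreasing left endpoint of the chosen representative (ties broken arbitrarily). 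In Lemma~\ref{lemmeFPT1} the coordinate one needs to control happened to be the corner column itself, which is no longer true for the $\bd$ shape. I then set $f_2(X)=X'=\{P''_1,\dots,P''_{\min(i_{max},k)}\}$; this has size at most $k$ and is clearly computable in polynomial time.

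For correctness let $S$, $x^*$ and $r$ be as in the statement, with $x^*\in G_{i^*}$, $hor(x^*)=[a^*,b^*]$, $ver(x^*)=[r,d^*]$, and note $a^*\le a''_{i^*}$ since $x^*\in G_{i^*}$. If $i^*\le k$ then $P''_{i^*}\in X'$ and I claim $S'=(S\setminus\{x^*\})\cup\{P''_{i^*}\}$ is independent. Since $P''_{i^*}$ has the same corner as $x^*$ we have $hor(P''_{i^*})\subseteq hor(x^*)$, so $P''_{i^*}$ creates no horizontal conflict that $x^*$ did not already create; and $ver(P''_{i^*})\supseteq ver(x^*)$, the extra part lying on column $b^*$ strictly above $d^*$, where the freeness of the top side of $x^*$ in $S$ guarantees that no other path of $S$ reaches. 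Hence $S'$ is an independent set of size $k$ meeting $X'$ (and if $P''_{i^*}=x^*$ the claim is trivial). This is exactly the role played by the ``top side free'' hypothesis, and it is the counterpart of the use of ``right side free'' in Lemma~\ref{lemmeFPT1}, but applied to the other main side of this differently oriented shape.

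If $i^*>k$ I proceed as in Lemma~\ref{lemmeFPT1}. First, since the right side of $x^*$ is free in $S$, every $y\in S\setminus\{x^*\}$ whose horizontal part lies on row $r$ satisfies $hor(y)=[c,d]$ with $d\le a^*$: a path on row $r$ starting at a column $\ge b^*$ is excluded by freeness, and a path overlapping $[a^*,b^*]$ would clash with $x^*$ --- here one uses that all endpoints are integer grid coordinates, so any overlap of length at least one yields a shared grid edge. On the other hand $a''_i\ge a''_{i^*}\ge a^*$ for every $i\le k$ by the chosen ordering (since $i<i^*$), so the horizontal part $[a''_i,b_i]$ of $P''_i$ is disjoint, up to a single grid point, from the horizontal part of every such $y$; consequently $P''_i$ can only conflict with a path of $S\setminus\{x^*\}$ through a vertical grid edge on its corner column $b_i$. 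The columns $b_1,\dots,b_k$ are pairwise distinct (distinct maximal groups have distinct corners on row $r$) and each path of $S\setminus\{x^*\}$ has a single vertical part on a single column, so at most $k-1$ of $P''_1,\dots,P''_k$ are blocked; choosing an unblocked $P''_{i_0}$, the set $(S\setminus\{x^*\})\cup\{P''_{i_0}\}$ is independent, and it has size $k$ and meets $X'$ unless $P''_{i_0}$ already lies in $S$, in which case $S$ itself meets $X'$ and we may take $S'=S$. The routine part is the ``integer endpoints force a shared edge'' bookkeeping in these overlap arguments; the one conceptual point to get right is that for the $\bd$ shape the groups must be ordered by the left endpoint of their topmost representatives rather than by their corner columns, the two coinciding for the $\bg$ shape treated in Lemma~\ref{lemmeFPT1} but not here.
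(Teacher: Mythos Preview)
Your proof is correct and follows essentially the same approach as the paper's. The only superficial difference is in how the group ordering is described: the paper first sorts all paths of $X$ by decreasing left endpoint and numbers the groups by order of first appearance in this scan, whereas you directly order the (maximal) groups by decreasing left endpoint $a''_i$ of their topmost representatives; since, by the no-containment assumption, the topmost path of a group is precisely the member with the largest left endpoint (and hence the first one encountered in the paper's scan), the two orderings coincide and the remainder of the argument is identical.
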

\begin{proof}
  Let us partition $X$ into groups, but not as in the previous lemma.
  Let $X=\{P_1,\dots,P_{|X|}\}$ be sorted such that $a_i \ge a_{i+1}$
  where $hor(P_i)=[a_i,b_i]$.  We create the groups according to the
  following procedure. We start with $G_1 = \{P_1\}$ and $i_{max}=1$,
  and we say that $G_1$ is created. Then, for $i$ from $2$ to $|X|$,
  if the corner of $P_i$ is equal to the corner of an already created
  group $j \le i_{max}$, then we add $P_i$ to $G_j$. Otherwise, we set
  $i_{max}=i_{max}+1$ and create a new group $G_{i_{max}} = \{P_i\}$.
  We denote by $i^*$ the index of the group of $x^*$ (ie.  $x^* \in
  G_{i^*}$).  Let $X'=\{P'_1,\dots,P'_{min(i_{max},k)}\}$, where
  $P'_i$ is the topmost path of $G_{i}$.  Finally, let
  $hor(P'_i)=[a'_i,b'_i]$, and $hor(x^*)=[a^*,b^*]$.

  Suppose first that $i^* \le k$. In this case, as the top side of
  $x^*$ is free in $S$, we can remove $x^*$ from $S$ and chose
  $P'_{i^*}$ to get another solution $S'$ of size $k$.

  Suppose now that $i^* > k$.  Notice first that in this case, by
  construction of the groups we have $a^* < a'_i$ for any $i\le k$ (as
  we collected first paths with rightmost left endpoint).  The proof
  is now exactly the same as in Lemma~\ref{lemmeFPT1}: paths of $X'$
  can only intersect $S \setminus \{x^*\}$ by sharing vertical edges
  of the underlying grid, and thus there exists $i_0 \le k$ such that
  $(S \setminus \{x^*\}) \cup \{P'_{i_0}\}$ is an independent set
  (\ie~we swap $x^*$ and $P'_{i_0}$ to get $S'$).
\qed\end{proof}



\begin{lemma}
\label{lemmeFPT3}
Let $X$ be a subset of paths such that every $P \in X$ has shape $\bg$.
We can construct in polynomial time a set $f_3(X)=X'\subseteq X$ with $|X'| \le 4k^2$ verifying the following property.

If there exists an independent set $S$, $|S|=k$ and a $x^* \in S \cap X$ such that the two orthogonal sides of $x^*$ are free
in $S$, and one of this side is a main side of $x^*$ (\ie~the free sides of $x^*$ are right/up, right/down or up/left),
 then there exists $S'$, $|S'| = k$ such that $S' \cap X' \neq \emptyset$.
\end{lemma}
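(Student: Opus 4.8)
I would split the proof according to which pair of orthogonal sides of $x^*$ is free: right/up, right/down, or up/left (the pair down/left is excluded since it contains no main side). The three cases are variants of one another up to exchanging the two coordinate axes and reflecting ``up''/``down'' or ``right''/``left'', so I would give the argument for right/up and indicate the modifications for the other two; accordingly $X'$ is built as a union of at most four analogous subsets, each of size at most $k^2$, which yields $|X'|\le 4k^2$. Throughout I use the standing assumption (stated just before the lemma) that no path of $X$ is contained in another, so that in each group the rightmost path has the smallest vertical part and the topmost path the smallest horizontal part.

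The core of the proof is a replacement fact that I would establish first. Suppose $cor(x^*)=(a^*,r^*)$ and that the right and up sides of $x^*$ are free in $S$. Then for \emph{any} path $P$ of shape $\bg$ with $cor(P)=(a^*,r^*)$, the set $(S\setminus\{x^*\})\cup\{P\}$ is again independent (the degenerate case $P\in S$ being handled on the side). Indeed, any $y\in S\setminus\{x^*\}$ whose horizontal part lies on row $r^*$ must, by non-adjacency to $x^*$ and freeness of the right side, end weakly to the left of column $a^*$; hence $hor(P)$, which starts at $a^*$ on row $r^*$, shares no grid edge with $S\setminus\{x^*\}$. Symmetrically, freeness of the up side forces every $y\in S\setminus\{x^*\}$ with a vertical part on column $a^*$ to end weakly below row $r^*$, so $ver(P)$ shares no grid edge with $S\setminus\{x^*\}$ either.

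To build $X'$ I would group the paths of $X$ by their corner, and then select at most $k^2$ groups by a two-level extremal rule in the spirit of Lemmas~\ref{lemmeFPT1} and~\ref{lemmeFPT2}: first keep the (at most) $k$ groups with rightmost corner-column, then, for each such corner-column, the (at most) $k$ groups on it with topmost corner-row, taking the rightmost path of each selected group as its representative. (For right/down one uses the $k$ lowest corner-rows, and for up/left the $k$ leftmost corner-columns, with the appropriate representative.) Then, in the right/up case: if $x^*$'s corner survives the selection, its representative lies in $X'$ and the replacement fact gives $S'$; otherwise $x^*$'s corner-column (or corner-row) is strictly dominated by the $k$ kept ones, and, exactly as in Lemma~\ref{lemmeFPT1}, the freeness of the right (resp.\ up) side of $x^*$ prevents the $k-1$ paths of $S\setminus\{x^*\}$ from blocking the kept representatives in that direction. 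A pigeonhole count over the resulting $k\times k$ array of representatives -- the $k-1$ paths of $S\setminus\{x^*\}$ obstruct at most $k-1$ of the $k$ rows and at most $k-1$ of the $k$ columns occupied by the representatives -- then yields a representative obstructed by no path of $S\setminus\{x^*\}$, and swapping $x^*$ for it produces the desired $S'$ with $S'\cap X'\neq\emptyset$.

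The main obstacle is this last counting step. It is clean when the kept corners genuinely form a full $k\times k$ array, but it degenerates when $X$ uses fewer than $k$ distinct corner-rows or corner-columns -- in the extreme, all $\bg$-paths share one row, as in Lemma~\ref{lemmeFPT1}. In that regime the two-dimensional pigeonhole collapses and one must instead run the one-dimensional swap of Lemmas~\ref{lemmeFPT1}--\ref{lemmeFPT2} inside the single line or column involved, which is precisely where the hypothesis that two orthogonal sides (one of them a main side) are free becomes indispensable: it is what guarantees that, after deleting $x^*$, the remaining at most $k-1$ obstructions all fall on one side and can be dodged among $k$ candidates. A little extra care is also needed for purely horizontal or purely vertical paths, which by convention are treated as degenerate $\bg$-paths, and for candidates that already belong to $S$.
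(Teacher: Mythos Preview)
Your approach diverges substantially from the paper's, and the divergence hides a genuine gap.

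The paper does \emph{not} attempt any two--dimensional selection among the corners of $X$. Instead it first computes, in polynomial time, a maximal independent set $A=\{P_1,\dots,P_{|A|}\}$ of the whole instance (returning $k$ elements of $A$ outright if $|A|>k$). Maximality forces $x^*$ to share a grid edge with some $P_{i_0}$, hence $x^*$ lies in $S_{i_0}^{hor}$ (all $\bg$-paths whose horizontal part is on the row of $P_{i_0}$) or in $S_{i_0}^{ver}$. Each $S_i^{hor}$ and $S_i^{ver}$ is a single-row or single-column family, so Lemmas~\ref{lemmeFPT1} and~\ref{lemmeFPT2} (up to the obvious left/right and horizontal/vertical symmetries) apply verbatim, giving $k$ candidates per family. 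One then sets $X'=\bigcup_{i\le |A|}\big(f_1(S_i^{hor})\cup f_2(S_i^{hor})\cup f_1(S_i^{ver})\cup f_2(S_i^{ver})\big)$, of size at most $4k^2$. The whole point of the maximal independent set is precisely to manufacture the single-row/single-column situation in which Lemmas~\ref{lemmeFPT1}--\ref{lemmeFPT2} are stated.

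Your ``replacement fact'' for a path sharing the corner of $x^*$ is fine (in the right/up case), but the pigeonhole step that follows does not go through. The assertion that ``the freeness of the right side of $x^*$ prevents the $k-1$ paths of $S\setminus\{x^*\}$ from blocking the kept representatives in that direction'' is exactly what fails: right-freeness of $x^*$ constrains only those $y\in S$ whose horizontal part lies on the row $r^*$ of $x^*$. A path $y\in S\setminus\{x^*\}$ with $hor(y)$ on some other row $r\neq r^*$ is completely unconstrained by the hypothesis, and can perfectly well overlap the horizontal part of every representative you kept on row $r$. So a single $y$ may horizontally block many representatives across one row while its vertical part blocks many more along one column, and with $k-1$ such $y$'s there is no reason your $k\times k$ array (which, as you note, is not even a genuine product grid) must contain an unobstructed cell. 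In Lemma~\ref{lemmeFPT1} the analogous step succeeds only because \emph{all} candidates share the row $r^*$, so right-freeness really does push every obstruction to the vertical side; your construction discards that common row and with it the leverage the hypothesis provides. This is exactly the reduction the paper achieves via the maximal independent set $A$, and it is the missing idea in your argument.
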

\begin{proof}
  Let $x^* \in S$.  We first construct greedily, in polynomial time, a maximal independent set
  $A=\{P_1,\dots,P_{|A|}\}$, $|A| \le k$ (if $|A| > k$ we define $X'$ as $k$ arbitrary vertices of $A$). Let $S_i^{ver}$ be the set
  of paths $P$ in the input with shape $\bg$ and which intersects
  $P_i$ by sharing a vertical edge of the underlying grid. Let also
  $S_i^{hor}$ be defined in the same way for horizontal edges (notice
  that we may have $S_i^{ver} \cap S_i^{hor} \neq \emptyset$).  As
  $A$ is maximal, there exists $i_0$ such that $x^*$ intersects
  $P_{i_0}$, and thus such that $x^* \in S_{i_0}^{ver}$ or $x^* \in
  S_{i_0}^{hor}$.  Let us suppose first that $x^* \in S_{i_0}^{hor}$.
  If the right side of $x^*$ is free in $S$, then by
  Lemma~\ref{lemmeFPT1} we know that we can replace $x^*$ by a $x^{*'} \in
  f_1(S_{i_0}^{hor})$. Otherwise, the topside and the leftside of
  $x^*$ are free in $S$. In this case we know by
  Lemma~\ref{lemmeFPT2} that we can replace $x^*$ by a $x^{*'} \in
  f_2(S_{i_0}^{hor})$.\\
  Considering also the case $x^* \in S_{i_0}^{ver}$, we know that
  $x^{*'}$ can be chosen in $X'=\bigcup_{i=1}^{|A|} (f_1(S_{i}^{hor})
  \cup f_2(S_{i}^{hor}) \cup f_1(S_{i}^{ver}) \cup f_2(S_{i}^{ver}))$
  which of size at most $4k^2$.  \qed\end{proof}

\begin{lemma}
\label{lemmeFPT4}
Let $(G,k)$ be a instance of MIS on $\{\hg\bg\bd\}$-EPG graphs.
We can construct in polynomial time a set $X'$ with $|X'| \le 17k^2$ verifying the following property.

If there exists an independent set $S$, $|S|=k$, then there exists $S'$, $|S'| = k$ such that $S' \cap X' \neq \emptyset$.
\end{lemma}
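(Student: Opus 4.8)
The plan is to combine Lemmas~\ref{lemmeFPT1}, \ref{lemmeFPT2} and~\ref{lemmeFPT3} with the variants of these lemmas obtained from the symmetries of the grid (the rotations by $90^{\circ}$ and the reflections). This gives, for each of the three shapes $\hg$, $\bg$, $\bd$ and for each admissible pattern of free sides — two orthogonal free sides one of which is a main side of the path, or a single free main side — a polynomial-time routine which, given a set $X$ of paths of that shape together with the row (or column) carrying a fixed grid line common to all paths of $X$, outputs a subset of $X$ of size $\mathcal{O}(k)$ or $\mathcal{O}(k^{2})$ guaranteed to meet some independent set of size $k$ as soon as a ``winning'' vertex $x^{*}\in S\cap X$ with that pattern exists. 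These are exactly the statements already proved, read up to a symmetry of the plane.

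Now suppose $S$ is an independent set with $|S|=k$. First I would greedily compute, in polynomial time, a maximal independent set $A$; if $|A|\ge k$ it suffices to put a single path of $A$ into $X'$ (and take $S'$ to be $k$ paths of $A$ containing it), so assume $|A|<k$. Next I would pick a well-chosen extremal vertex $x^{*}\in S$ — concretely, a path of $S$ whose corner is lexicographically maximal (first in the vertical, then in the horizontal coordinate), breaking ties inside the corresponding group exactly as in the proofs of Lemmas~\ref{lemmeFPT1} and~\ref{lemmeFPT2}. The crucial claim is that, whatever shape among $\hg$, $\bg$, $\bd$ the path $x^{*}$ has, this extremal choice forces two orthogonal sides of $x^{*}$ to be free in $S$, one of them a main side of $x^{*}$; this is precisely where the hypothesis that the shape $\hd$ is forbidden is used, since the ``missing'' fourth corner is what would otherwise let a path of $S$ block the side we need. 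Since $A$ is maximal, $x^{*}$ shares a grid edge with some $P_{i_0}\in A$ (possibly $x^{*}=P_{i_0}$), hence $x^{*}$ lies in one of the local sets $S_{i_0}^{hor}$ or $S_{i_0}^{ver}$. It then suffices to let $X'$ be the union, over $P_i\in A$, over the two edge types, and over the three shapes, of the outputs of the appropriate routines applied to the corresponding local sets; by the claim one of these outputs contains a valid replacement of $x^{*}$, and summing the sizes of the contributions yields $|X'|\le 17k^{2}$.

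The main obstacle is the case analysis behind the crucial claim: one has to check separately, for $x^{*}$ of shape $\hg$, of shape $\bg$, and of shape $\bd$, that the lexicographically-extremal corner does produce an admissible pair of free sides, and to handle carefully the degenerate paths (purely horizontal or purely vertical, treated here as paths of shape $\bg$) and the ties in which two paths of $S$ share the extremal corner — the list of shape-pairs that can simultaneously realize the same corner inside an independent set is short precisely because $\hd$ is excluded, and this is what makes the refined choice of $x^{*}$ go through. A secondary, purely arithmetical point is to organize the union defining $X'$ economically: only those symmetric variants of Lemma~\ref{lemmeFPT3} matching the restricted list of (shape, free-side pattern) configurations that the extremal vertex can realize are actually needed, and it is this observation that keeps the final bound at $17k^{2}$ rather than a larger multiple of $k^{2}$.
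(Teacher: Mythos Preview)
Your overall strategy—pick a lexicographically-extremal $x^*\in S$, argue about its free sides, and harvest replacements via Lemmas~\ref{lemmeFPT1}--\ref{lemmeFPT3} together with a maximal independent set $A$—is exactly the paper's. The gap is in the tie case, and it is a real one: your ``crucial claim'' is false there, and no refined choice of $x^*$ rescues it.

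Concretely, suppose the extremal corner $(a^*,b^*)$ is realised simultaneously by two paths of $S$: $x^*$ of shape $\bd$ and $x$ of shape $\hg$ (this is the unique shape pair that can share a corner inside an independent set once $\hd$ is excluded, as you correctly note). Then for $x^*$ the up side is free but the right side is blocked by $x$; for $x$ the right side is free but both the up side and the left side are blocked by $x^*$. Neither path is guaranteed a second orthogonal free side: a $\hg$-path of $S$ with corner $(a^*,b_y)$, $b_y<b^*$, can sit below $x$ and block its down side, and paths with corner on row $b^*$ to the left can block the left side of $x^*$. So Lemma~\ref{lemmeFPT3} (and its symmetric variants) simply does not apply, and restricting to the row or column of some $P_{i}\in A$ and invoking Lemma~\ref{lemmeFPT1} or~\ref{lemmeFPT2} succeeds only when $x^*$ meets $A$ vertically or $x$ meets $A$ horizontally.

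The remaining subcase—$x^*$ meets $A$ only horizontally (say on the row $r_{i_0}$ of $P_{i_0}$) and $x$ meets $A$ only vertically (on the column $c_{i_1}$ of $P_{i_1}$)—needs an argument that your union of $f_1,f_2,f_3$-outputs does not contain. In that subcase the common corner is forced to equal $(c_{i_1},r_{i_0})$, so one precomputes, for every pair $i,j$ with $1\le i,j\le |A|<k$, the topmost $\bd$-path whose corner is $(c_{j},r_{i})$; one of these at most $k^2$ paths can then replace $x^*$ in $S$ (its up side is free). The $17k^2$ budget decomposes as $12k^2$ from three applications of Lemma~\ref{lemmeFPT3} (one per shape, covering the non-tie case), plus $4k^2$ from the two easy subcases of the tie via Lemma~\ref{lemmeFPT1}, plus this extra $k^2$.
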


\begin{proof}
   Let $x^* \in S$ (with
  $cor(x^*) = (a^*,b^*)$) be the top-right most path of $S$ (\ie~for
  any $x \in S$ with $cor(x)=(a,b)$, either $b < b^*$ (Case 1) or $b
  = b^*$ and $a \le a^*$ (Case 2).  Notice that the only case where $b
  = b^*$ and $a = a^*$ is when $x^*$ has shape $\bd$ and there exists
  a unique $x \in S$ with shape $\hg$, and $cor(x)=cor(x^*)$.

  (Case 1) Let us first consider the case where for any $x \in S$
  with $cor(x)=(a,b)$, either $b < b^*$ or ($b = b^*$ and $a <
  a^*$).  Whatever the shape of $x^*$, $x^*$ has always two orthogonal free sides in $S$,
  and one of this side is a main side (for instance, if $x^*$ has
  shape $\bg$ then his two main directions are free, if $x^*$ has shape
  $\hg$ or $\bd$, one of his main direction and another orthogonal
  direction are free). Thus, we apply three times Lemma~\ref{lemmeFPT3} (one time for $X$ containing all the paths of a given shape), 
  and we construct in polynomial time a set $Y$ of size at most $12k^2$ verifying the desired property.

  (Case 2) It remains now the case where $x^*$ has shape $\bd$ and
  there exists a unique $x\in S$ with shape $\hg$, and
  $cor(x)=cor(x^*)$.  In this case, as in Lemma~\ref{lemmeFPT3} we
  first construct in polynomial time a maximal independent set
  $A=\{P_1,\dots,P_{|A|}\}$, and thus (re-using the notation of
  Lemma~\ref{lemmeFPT3}) we know that there exist $i_0, i_1$ such that
  ($x^* \in S_{i_0}^{ver}$ or $x^* \in S_{i_0}^{hor}$) and
  ($x' \in S_{i_1}^{ver}$ or $x' \in S_{i_1}^{hor}$).\\
  If $x^* \in S_{i_0}^{ver}$, then by Lemma~\ref{lemmeFPT1} we know
  that we can chose $Z=f_1(S_{i_0}^{ver})$ (of size at most $k$)
  In the same way, if $x \in S_{i_1}^{hor}$, then by
  Lemma~\ref{lemmeFPT1} we know that we can chose $Z=f_1(S_{i_1}^{hor})$. So in these cases, we know that we
  can chose $Z=\bigcup_{i=1}^{|A|} (f_1(S_{i}^{hor}) \cup f_2(S_{i}^{hor}) \cup
  f_1(S_{i}^{ver}) \cup f_2(S_{i}^{ver}))$ of size at most $4k^2$.
  It remains the case where $x^* \in S_{i_0}^{hor}$ and $x \in
  S_{i_1}^{ver}$. Let $c_{i_1}$ be the column of the vertical part of
  $P_{i_1}$, and $r_{i_0}$ be the row of the horizontal part of
  $P_{i_0}$. In this last case we can deduce that
  $cor(x^*)=cor(x)=(c_{i_1},r_{i_0})$. Thus, we can replace in $S$
  vertex $x^*$ by $x_{i_0,i_1}$: the topmost path of the group
  $G_{i_0,i_1}$ of $\bd$ paths having their corner at
  $(c_{i_1},r_{i_0})$. So in this case, we find an element of an
  optimal solution belongs the set of the topmost paths of each group
  $G_{i,j}$ with $1 \le i,j \le k$. This set has size $k^2$.

  To summarize, if we want to ensure that we capture a vertex of another solution $S'$ of the same size,
  we have to construct the set $Y$ of size $12k^2$
  for the first case, and a set $Z$ of size $4k^2+k^2=5k^2$ for the second
  case, leading to a choice between at most $17k^2$ paths.
\qed\end{proof}

We are now ready to prove the main result of this section whose proof is immediate using Lemma~\ref{lemmeFPT4}.
\begin{theorem}\label{thm-FPT}
  The question $OPT \ge k$ can be solved in time $O(k^{2k}poly(n))$ in
  $B_1$-EPG graph with three shapes of paths.
\end{theorem}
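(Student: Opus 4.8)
The plan is to describe a branching algorithm that uses Lemma~\ref{lemmeFPT4} as its core subroutine, and to analyze its running time. The statement asks to decide whether $OPT \ge k$ in a $\{\hg\bg\bd\}$-EPG graph; by the reduction remarks in the text we may assume the input comes with an EPG-representation using only these three shapes, and that no path entirely contains another.

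First I would recall what Lemma~\ref{lemmeFPT4} gives: in polynomial time we can compute a set $X'$ with $|X'| \le 17k^2$ such that \emph{if} there is an independent set of size $k$, then there is one hitting $X'$. The natural algorithm is then a bounded search tree: if $k = 0$ (or $k=1$, handled trivially), answer yes; otherwise compute $X'$, and for each $P \in X'$ recurse on the instance $(G', k-1)$ where $G'$ is obtained from $G$ by deleting $P$ together with all paths sharing a grid edge with $P$ (the closed neighborhood of $P$). If any branch returns yes, return yes; otherwise return no. The key correctness point is that if $(G,k)$ is a yes-instance, then by Lemma~\ref{lemmeFPT4} some size-$k$ independent set $S'$ satisfies $S' \cap X' \neq \emptyset$; picking $P \in S' \cap X'$, the set $S' \setminus \{P\}$ is an independent set of size $k-1$ in $G - N[P]$, so the branch on $P$ recurses on a yes-instance. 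Conversely any independent set found by the algorithm is genuinely independent in $G$, since we always delete closed neighborhoods. So the algorithm is correct, and one should note that Lemma~\ref{lemmeFPT4} must be re-applied to each recursive instance (the deleted-neighborhood graph is still a $\{\hg\bg\bd\}$-EPG graph, with the restriction on containment re-established by the preprocessing remark).

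For the running time: the search tree has depth at most $k$, and at each node the branching factor is $|X'| \le 17k^2$. Hence the number of leaves is at most $(17k^2)^k$, and the number of nodes is $O((17k^2)^k)$; at each node we spend $poly(n)$ time to compute $X'$ (Lemma~\ref{lemmeFPT4}) and to build the recursive instances. Therefore the total running time is $O((17k^2)^k \cdot poly(n)) = 2^{O(k \log k)} poly(n) = O(k^{2k} poly(n))$, absorbing the constant $17$ and the factor $2$ into the exponent (for $k$ large enough $17 k^2 \le k^3$, and more crudely $(17k^2)^k \le k^{O(k)}$; a slightly more careful bookkeeping gives the stated $k^{2k}$ up to the polynomial factor). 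This is FPT in the parameter $k$.

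The main obstacle I anticipate is not in the branching scheme itself --- which is entirely standard once Lemma~\ref{lemmeFPT4} is in hand --- but in making sure the hypotheses of Lemma~\ref{lemmeFPT4} remain satisfied along the recursion: namely that each recursive instance is still a valid $\{\hg\bg\bd\}$-EPG instance with no containment among paths, which is fine since deleting vertices and re-running the trivial containment-removal preserves this; and that the constant in the exponent is tracked honestly so the claimed $O(k^{2k}poly(n))$ bound is justified (the clean way is to state that the number of recursion leaves is $(17k^2)^k$ and observe $(17k^2)^k = 2^{O(k\log k)}$, which is $O(k^{2k}poly(n))$ once one is willing to lose the leading constants into the polynomial and the $\log$-factor into the exponent). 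No deep idea is needed beyond this; the work is in the bookkeeping.
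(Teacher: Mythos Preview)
Your proposal is correct and is exactly the approach the paper intends: its proof reads ``immediate using Lemma~\ref{lemmeFPT4}'', and the branching algorithm you spell out is precisely the one sketched in the paragraph preceding Lemma~\ref{lemmeFPT1}. Your caveat about the constant in the running time is fair --- the paper itself is loose here, writing both $17k^{2k}$ and $O(k^{2k}poly(n))$ where the honest branching bound is $(17k^2)^k$ --- but this does not affect the FPT claim.
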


\subsection{W[1]-hardness of MIS on $B_2$-EPG graphs}

\begin{theorem}\label{thm-W1}
MIS is W[1]-hard, even restricted to $B_2$-EPG graphs.
\end{theorem}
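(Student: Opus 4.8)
The plan is to give a parameterized reduction from \emph{Grid Tiling with inequalities} — equivalently, from Multicolored Clique — which is W[1]-hard with parameter $k$. Recall that a Grid Tiling instance consists of an integer $n$ and, for each cell $(i,j)$ of a $k\times k$ array, a set $S_{i,j}\subseteq[n]\times[n]$; one must choose $(\gamma_{i,j},\rho_{i,j})\in S_{i,j}$ for every cell so that $\gamma_{i,j}\le\gamma_{i,j+1}$ along each row of the array and $\rho_{i,j}\le\rho_{i+1,j}$ along each column. After the standard preprocessing that rescales every value (replacing $v$ in array column $j$ by $kv+j$, and similarly for the other coordinate), we may even assume these inequalities are strict. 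From such an instance I would build a $B_2$-EPG graph $H$: place a $k\times k$ array of pairwise disjoint axis-parallel \textbf{cell blocks} $B_{i,j}$ on the grid, with a thin \textbf{corridor} between each pair of orthogonally adjacent blocks; for every cell $(i,j)$ and every $(\gamma,\rho)\in S_{i,j}$ create one path $P^{i,j}_{\gamma,\rho}$ having at most two bends; and set the target independent set size to $k'=k^{2}$.

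The paths are engineered to enforce two properties. First, all paths of one cell pairwise share a grid edge in a fixed \textbf{hub} of $B_{i,j}$, so that any independent set of $H$ contains at most one path per cell, hence has size at most $k^{2}$, with equality forcing exactly one path per cell. Second, for orthogonally adjacent cells the two selected paths are made non-adjacent if and only if the corresponding pair of values satisfies the relevant inequality: inside the joining corridor the two paths occupy, on one common grid line, two intervals of the shape $[\ell,\ell+\gamma]$ and $[\ell+\gamma',r]$ — anchored at opposite ends and running in the same direction — which overlap on a grid edge exactly when $\gamma'\le\gamma$, i.e.\ exactly when the strict inequality is violated. Given these two properties, an independent set of size $k^{2}$ in $H$ is exactly a solution of the Grid Tiling instance and conversely; since $k'=k^{2}$ depends only on $k$, this is a parameterized reduction, and its correctness is then a routine verification (distinct blocks use disjoint row/column ranges, so outside the hubs and corridors no two paths can meet).

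The main obstacle — and the reason the statement concerns $B_2$-EPG rather than $B_1$-EPG — is the routing within two bends: each cell must communicate with up to four neighbours, so its path must reach four distinct corridors while also passing through its hub, and a one-bend (L-shaped) path provably cannot do this. In fact no $B_1$-EPG realization of this scheme can work, which is consistent with the FPT algorithm for $B_1$-EPG with three shapes obtained in Theorem~\ref{thm-FPT}: with an L-shaped path the value-encoding of a cell is visible only outside its home block, so the paths of two adjacent cells never share a region in which both encodings are visible, and moreover a fixed hub is incompatible with a value-dependent arm. I would overcome this by taking each $P^{i,j}_{\gamma,\rho}$ to be a \emph{staircase} with two bends: one long straight segment handles the comparisons with two opposite neighbours simultaneously (the positions of its two far ends inside those two corridors encode the chosen value), while the other two segments pass through the hub and reach the remaining two corridors, the figure remaining a simple path because each corner is an endpoint of both of its incident segments — and it is exactly this last requirement that forces the second bend. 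A delicate point is that two opposite corridors are most naturally reached by a \emph{single} straight segment, so the array layout and the per-cell choice of segment directions must be fixed so that, for every cell, the four corridors split consistently into the ``straight pair'' and the ``bent pair''.

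The remaining work is mechanical: fix the block dimensions, the hub rows and columns, and the affine maps sending $\gamma$ and $\rho$ to the segment lengths so that the corridor overlaps encode the inequalities as above, and check that no two paths share an unintended grid edge.
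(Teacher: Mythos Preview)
Your reduction has a real geometric gap at the routing step. Consider an interior cell $(i,j)$: its path $P^{i,j}_{\gamma,\rho}$ must implement four value-comparisons, two encoding $\gamma$ (with the horizontal neighbours) and two encoding $\rho$ (with the vertical ones), and your overlap mechanism requires the two compared paths to lie on a \emph{common} grid line in each corridor. Now a $2$-bend path has only three axis-parallel segments. Take the ``long horizontal'' case you describe (the other orientation is symmetric): the sliding horizontal segment on a fixed row $R$ indeed handles the left and right comparisons via its two $\gamma$-dependent endpoints, but then the perpendicular segments attach at those very endpoints, hence sit on $\gamma$-dependent columns. For the $\rho$-comparison with the top and bottom neighbours you need vertical segments on a column \emph{shared} with those neighbours' paths; since their columns depend on \emph{their} $\gamma$-values, the segments coincide only when the $\gamma$'s happen to agree, so generically the vertical constraint is never tested. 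Dually, if you pin the vertical segment to a fixed shared column so that it can carry $\rho$, its two endpoints become the rows of the two horizontal pieces, and those rows are now forced, leaving no $\rho$-dependent coordinate on the path at all. In short, with three segments and two free endpoints you can serve the ``straight pair'' of corridors, but the ``bent pair'' is left with arms whose positions are already spent on the other value; your claim that ``the remaining work is mechanical'' is precisely where the argument breaks.

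The paper sidesteps this obstruction by abandoning the one-path-per-choice scheme. It reduces from \textsc{Multicolored Clique} and builds, for every pair $1\le i<j\le k$, an \emph{edge gadget}: a rectangular track $T_{i,j}$ covered by six families of $B_2$-paths $a_\ell,a'_\ell,b_\ell,b'_\ell,e_\ell,e'_\ell$ (six cliques), so that picking one path from each family forces a single edge $e_\ell$ between $V_i$ and $V_j$. The arms $a_\ell,a'_\ell$ (resp.\ $b_\ell,b'_\ell$) leave $T_{i,j}$ and enter a separate \emph{vertex gadget} $T_i$ (resp.\ $T_j$), itself made of three cliques, which forces all selected edges touching $V_i$ to share a common endpoint. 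The target independent-set size is $6\binom{k}{2}+3k=3k^2$. Using several paths per gadget is exactly what supplies the extra bends a single $2$-bend path cannot: each of the six paths only has to reach one external gadget, not four.
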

\begin{proof}
We proceed by an FPT-reduction from the $k$-MULTICOLORED CLIQUE
problem introduced in~\cite{k-CLIQUE}, and used there to prove the
W[1]-hardness of MIS for unit 2-interval graphs. The reduction we
present here is inspired on this reduction and on the one used for
2-track graphs~\cite{jiang-unit2track}. Let $k$ be any positive
integer, and consider a graph $G$ with vertex set $V=V_1\cup\ldots\cup
V_k$, where each set $V_i$ is a stable set.  We are going to construct
a $B_2$-EPG graph $H$ such that $G$ has a $k$-clique if and only if
$H$ has an independent set of size $3k^2$.

As the graph $H$ will be such that its vertex set decomposes into
$6{k\choose 2} +3k = 3k^2$ induced cliques, $H$ cannot have an independent
set of size greater than $3k^2$. Furthermore, $H$ will
have an independent set of size $3k^2$ (\ie~$H$ is a
positive instance) if and only if one can select one vertex from each
of these cliques.

Let us define an {\bf edge gadget}. For any pair $i,j$ of integers
such that $ 1\le i<j\le k$, we construct a gadget in $H$ that
corresponds to the edges between the sets $V_i$ and $V_j$.  Let us
denote $e_\ell=\{v_{i,x},v_{j,y}\}$, for $1\le \ell\le m$ those edges,
assuming there are $m$ of them, with $v_{i,x}\in V_i$ and $v_{j,y}\in
V_j$.  For each such edge $e_\ell$, there are 6
independent paths in $H$, denoted $a_\ell, a'_\ell, b_\ell, b'_\ell,
e_\ell, e'_\ell$, that cover a circular track $T_{i,j}$ (see
Figure~\ref{fig:W1-edge}). The parts leaving the track will be defined
later, but these parts will not induce new edges among those paths. As
these paths can be partitioned into 6 cliques (one for the paths
$a_\ell$, one for the paths $a'_\ell$, $\ldots$), an independent set
uses at most 6 of these paths. Furthermore, one can see that if it
uses 6 of them, then those are the paths $a_\ell, a'_\ell, b_\ell,
b'_\ell, e_\ell, e'_\ell$ for a given $1\le \ell \le m$. In such a
case we say that the edge $e_\ell$ of $G$ was {\bf selected}.

\begin{figure}[h!]
\begin{center}
\includegraphics[width=0.9\textwidth]{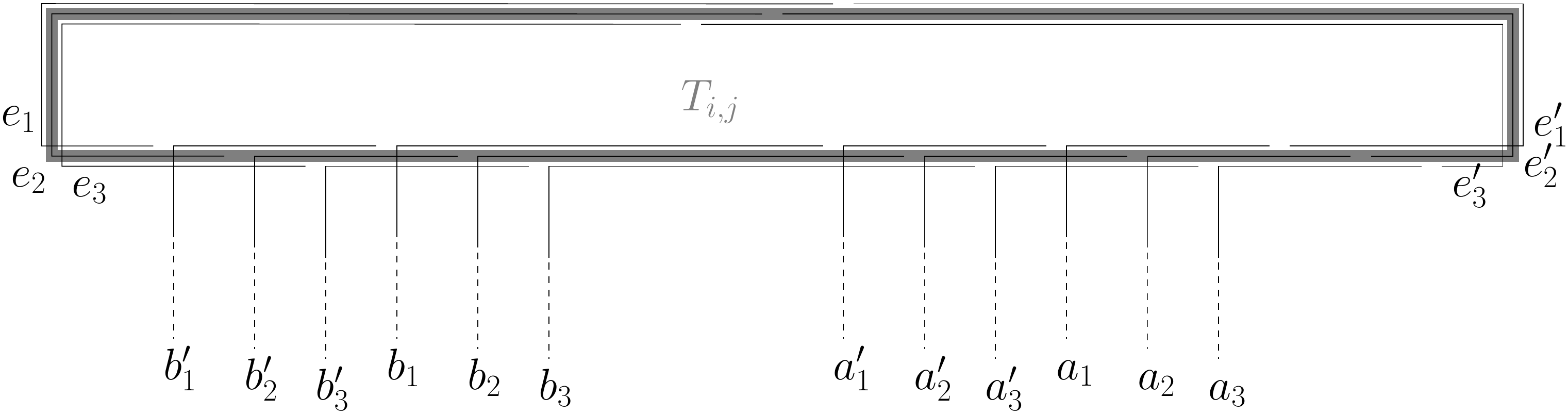}
\end{center}
\caption{An edge gadget on the circular track $T_{i,j}$, \ie~for the
  edges between $V_i$ and $V_j$. Here there are only 3 such edges.}
\label{fig:W1-edge}
\end{figure}

Let us now define a {\bf vertex gadget}. For any integer $i$ such
that $ 1\le i\le k$, we construct a gadget in $H$ that corresponds to
the vertices of the set $V_i$.  Let us denote $v_{i,x}$ these vertices,
for $1\le x\le |V_i|$.  For each such vertex $v_{i,x}$, there are 3
independent paths in $H$, denoted $v_{i,x}, v'_{i,x}, v''_{i,x}$, that
partially cover a circular track $T_{i}$ (see
Figure~\ref{fig:W1-vertex}). As these paths can be partitioned into 3
cliques (one for the paths $v_{i,x}$, one for the paths $v'_{i,x}$, and
one for the paths $ v''_{i,x}$), an independent set uses at most 3 of
these paths.

\begin{figure}[h!]
\begin{center}
\includegraphics[width=0.9\textwidth]{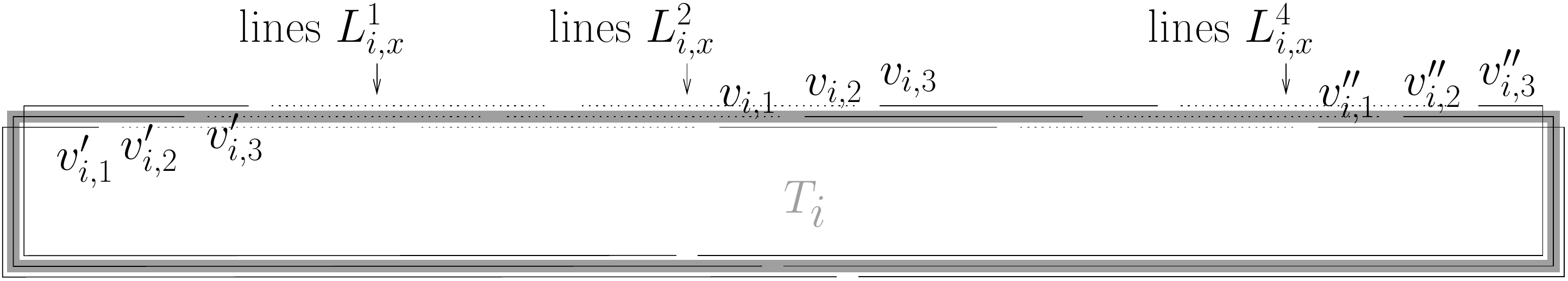}
\end{center}
\caption{A vertex gadget on the circular track $T_{i}$ (\ie~for the
  vertices in $V_i$), and the dotted lines $v^j_{i,x}$. Here $k=4$ and $i=3$.}
\label{fig:W1-vertex}
\end{figure}

In Figure~\ref{fig:W1-vertex}, there are also dotted lines $v^j_{i,x}$
for $1\le x\le |V_i|$ and $1\le j \le k$ with $j\neq i$. These lines
will serve as landmarks to describe the end of the paths $a_\ell,
a'_\ell$ (resp. $b_\ell,b'_\ell$) coming from the edge gadgets
$T_{i,j}$ with $i< j\le k$ (resp. from the edge gadgets $T_{j,i}$ with
$1\le j<i$). See Figure~\ref{fig:W1-connection} for a description of
how these paths connect to a vertex gadget. The idea is that the paths
$a_\ell$, and $a'_\ell$ coming from $T_{i,j}$ go to $T_{i}$ and cover
the ends of the line $v^j_{i,x}$ if and only if the edge $e_\ell$ is
incident to vertex $v_{i,x}\in V_i$. The connection between the edge
gadget on $T_{i,j}$ and the vertex gadget on $T_{j}$ is similar but
uses paths $b_\ell$, and $b'_\ell$ instead.

\begin{figure}[h!]
\begin{center}
\includegraphics[width=0.9\textwidth]{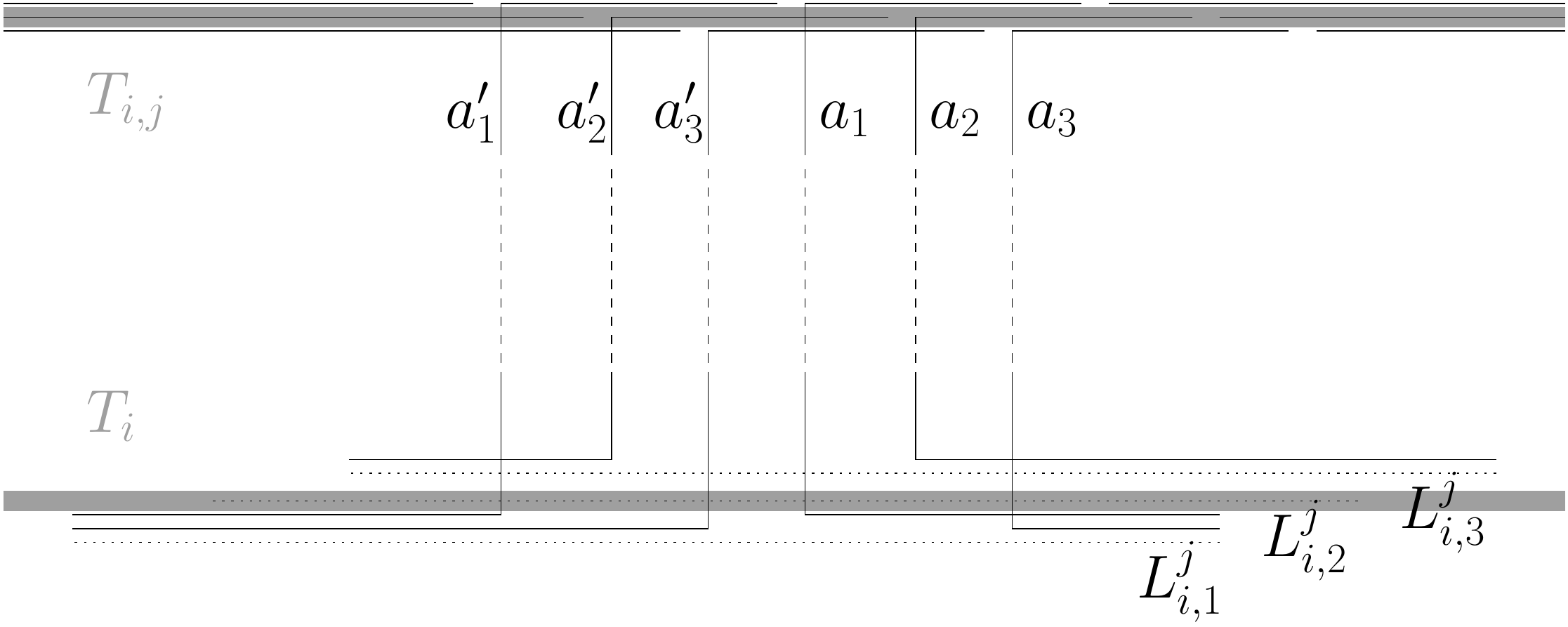}
\end{center}
\caption{Connection between paths $a_\ell$ and $a'_\ell$ of an edge
  gadget on $T_{i,j}$ to the vertex gadget on $T_{i}$. Here we assume
  that $e_1$ and $e_3$ are incident to $v_{i,1}$, while $e_2$ is incident
  to $v_{i,3}$, where $V_i = \{v_{i,1},v_{i,2},v_{i,3}\}$.}
\label{fig:W1-connection}
\end{figure}

One can see in Figure~\ref{fig:W1-overall}, how to arrange all these
gadgets in order to make the above mentioned connections, and no
other. Note that the paths $a_\ell, a'_\ell, b_\ell$ and $b'_\ell$ leaving $T_{1,2}$ cross several gadgets, without creating any edge, before entering gadget $T_{1}$ or $T_{2}$. 

\begin{figure}[htbp!]
\begin{center}
\includegraphics[width=1\textwidth]{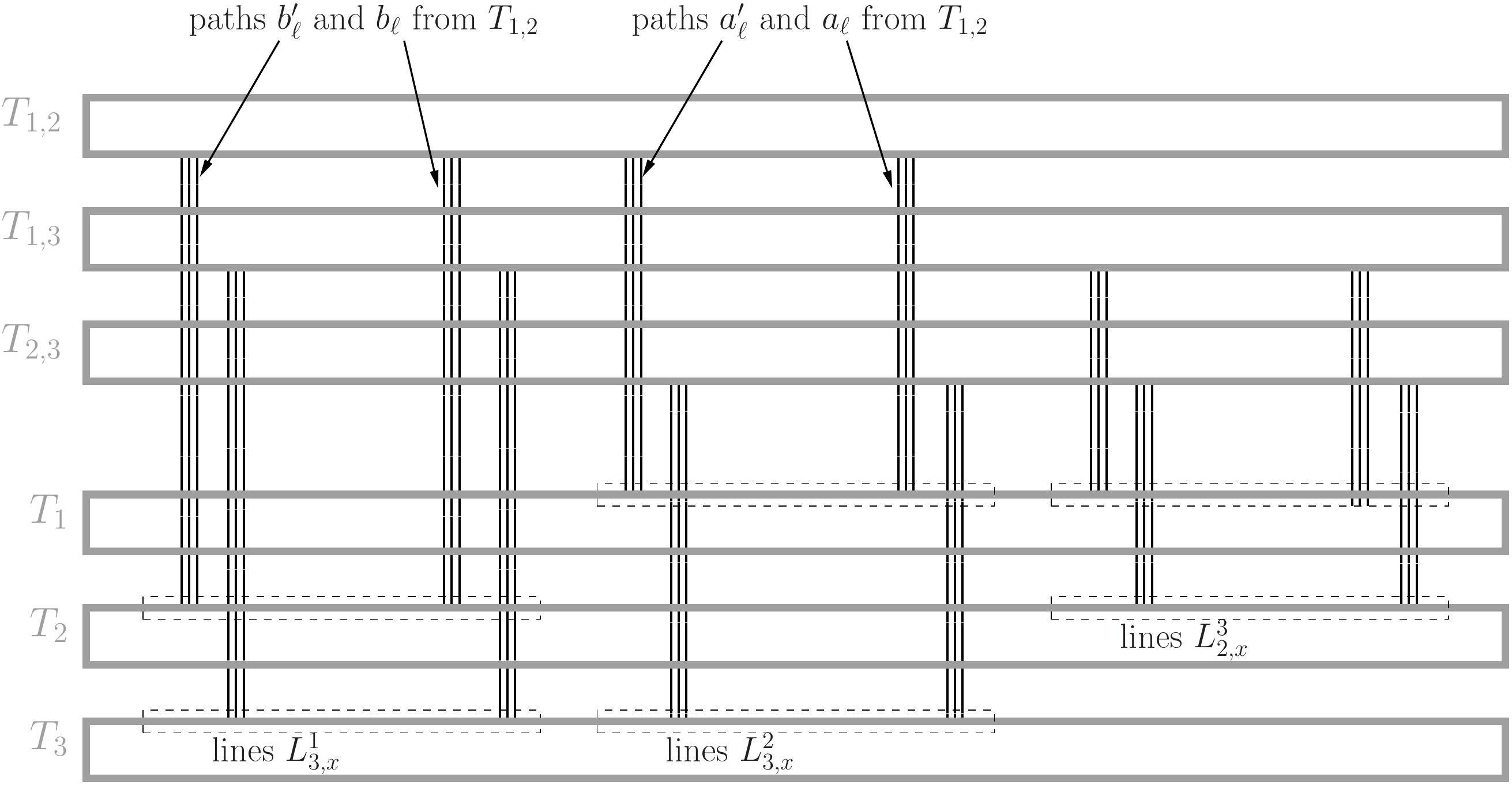}
\end{center}
\caption{Overall arrangement of the gadgets. Here $k=3$.}
\label{fig:W1-overall}
\end{figure}

Hence, one can see that an independent set of size $3k^2$ (hence using
one path from each of the above mentioned cliques, that is using 6
vertices from each edge gadget and 3 vertices from each vertex gadget)
selects one edge from each pair $i,j$ with $1\le i<j\le k$. Suppose
that between $V_i$ and $V_j$ the edge $e_\ell$ is selected. This
implies that at the vertex gadget $T_i$ (resp. $T_j$), the paths
$a_\ell, a'_\ell$ (resp. $b_\ell, b'_\ell$) of $T_{i,j}$, that are in
the independent set, cover the line $v^j_{i,x}$ (resp. $v^i_{j,y}$)
for some $1\le x\le |V_i|$ (resp. $1\le y\le |V_j|$) such that $e_\ell
= (v_{i,x},v_{j,y})$.  Then on $T_i$ (resp. $T_j$) the paths in the
independent set are necessarily $v_{i,x}$, $v'_{i,x}$ and $v''_{i,x}$
(resp. $v_{j,y}$, $v'_{j,y}$ and $v''_{j,y}$). Hence the vertex
$v_{i,x}\in V_i$ (resp. $v_{j,y}\in V_j$) is selected. This implies
that $H$ has an independent set of size $3k^2$ if and only if $G$ has
a $k$-clique (induced by the selected edges and vertices), and this
concludes the proof of the theorem.  \qed\end{proof}

\newpage
\bibliographystyle{plain}
\bibliography{biblio.bib}

\end{document}